\colorlet{robot_blue}{blue!15}
\colorlet{robot_red}{red!30}
\colorlet{robot_green}{green!20}
\colorlet{robot_yellow}{yellow!20}
\def\hexagonsize{0.578cm}
\tikzset{robot_node/.style={thick,shape=circle,draw=black,
fill=robot_blue,text=black,minimum size=0.5cm,inner sep=0.1}}
\tikzset{point_node/.style={thick,shape=circle,draw=black,
fill=black,minimum size=2,inner sep=0}}
\tikzset{vertex_node/.style={thick,shape=circle,draw=black,
fill=black,minimum size=5,inner sep=0}}
\tikzset{every picture/.style={thick, font=\small}}
\tikzset{>={Latex[length=2mm]}}
\tikzset{hexagon/.style={semithick, shape=regular polygon, regular polygon sides=6,
minimum size=2*\hexagonsize, draw=gray, inner sep=0,anchor=south}}
\pgfplotsset{compat=1.13}
\newtheorem{problem}{Problem}
\newtheorem{lemma}{Lemma}[section]
\newtheorem{corollary}{Corollary}[section]
\newtheorem{theorem}{Theorem}[section]
\theoremstyle{definition}
\newtheorem{definition}{Definition}[section]
\theoremstyle{remark}
\newtheorem*{remark}{Remark}
\newcommand{\sag}{\textsc{SAG}\xspace}
\newcommand{\tsm}{\esrc}
\newcommand{\tsmilp}{\textsc{\esrc-ILP}\xspace}
\newcommand{\tsmsag}{\textsc{\esrc-SAG}\xspace}
\newcommand{\cmpp}{\textsc{CMPP}\xspace}
\newcommand{\mrpp}{\textsc{MPP}\xspace}
\newcommand{\esrc}{\textsc{SEAR}\xspace}
\newcommand{\rspace}{\mathbb{R}}
\newcommand{\starts}{X^I}
\newcommand{\goals}{X^G}
\newcommand{\nonneg}{\mathbb{R}_{\geq 0}}
\title{\LARGE \bf
\esrc: A Polynomial-Time Multi-Robot Path Planning Algorithm with Expected 
Constant-Factor Optimality Guarantee 
}
\author{
Shuai D. Han \quad Edgar J. Rodriguez \quad Jingjin Yu
\thanks{
S. D. Han and J. Yu are with the Department of Computer Science, 
Rutgers, the State University of New Jersey, Piscataway, NJ, USA. 
\{{\tt shuai.han, jingjin.yu}\}\hspace*{.25em}\MVAt \hspace*{.25em}rutgers.edu. 
E. J. Rodriguez is with the Department of Mathematics, 
University of California Berkeley, Berkeley, CA, USA. 
{\tt ejaramillo}\hspace*{.25em}\MVAt \hspace*{.25em}berkeley.edu. 
}%
%\thanks{
%This work is supported by \TODO{what?}
%Opinions or findings expressed here do not
%reflect the views of the sponsor.
%}% <-this % stops a space
}
\begin{document}
\maketitle

\thispagestyle{empty}
\pagestyle{empty}

\begin{abstract}
We study the labeled multi-robot path planning problem in 
continuous 2D and 3D domains in the absence of obstacles where 
robots must not collide with each other.

For an arbitrary number of robots in arbitrary initial and goal 
arrangements, we derive a polynomial time, complete 
algorithm that produces solutions with constant-factor optimality 
guarantees on both makespan and distance optimality, in expectation, 
under the assumption that the robot labels are uniformly randomly 
distributed.

Our algorithm only requires a small constant factor expansion of the 
initial and goal configuration footprints for solving the problem, 
i.e., the problem can be solved in a fairly small bounded region. 

Beside theoretical guarantees, we present a thorough computational 
evaluation of the proposed solution. In addition to the baseline 
implementation, adapting an effective (but non-polynomial time) 
routing subroutine, we also provide a highly efficient implementation 
that quickly computes near-optimal solutions. Hardware experiments 
on the microMVP platform composed of non-holonomic robots confirms the 
practical applicability of our algorithmic pipeline. 

% One line version for submission
% We study the labeled multi-robot path planning problem in continuous 2D and 3D domains in the absence of obstacles where robots must not collide with each other. For an arbitrary number of robots in arbitrary initial and goal arrangements, we derive a polynomial time, complete algorithm that produces solutions with constant-factor optimality guarantees on both makespan and distance optimality, in expectation, under the assumption that the robot labels are uniformly randomly distributed. Our algorithm only requires a small constant factor expansion of the initial and goal configuration footprints for solving the problem, i.e., the problem can be solved in a fairly small bounded region. Beside theoretical guarantees, we present a thorough computational evaluation of the proposed solution. In addition to the baseline implementation, adapting an effective (but non-polynomial time) routing subroutine, we also provide a highly efficient implementation that quickly computes near-optimal solutions. Hardware experiments on the microMVP platform composed of non-holonomic robots confirms the practical applicability of our algorithmic pipeline. 

\end{abstract}
\section{Introduction}\label{sec:Introduction}

In a labeled multi-robot path planning (\mrpp) 
problem\footnote{In this paper, path planning and motion planning 
carry identical meanings, i.e., the computation of time-parametrized,
collision-free, dynamically feasible solution trajectories. We often 
use \mrpp\ as an umbrella term to refer to multi-robot path (and motion) 
planning problems.}, 
we are to move multiple rigid bodies (e.g., vehicles, mobile robots, 
quadcopters) from an initial configuration to a goal configuration 
in a collision-free manner. As a key sub-problem in many multi-robot 
applications, fast, optimal, and practical resolution to \mrpp\ are 
actively sought after. On the other hand, \mrpp, in its many forms, 
have been shown to be computationally hard 
\cite{SpiYak84,HopSchSha84,SolHal15,Yu2015IntractabilityPlanar}. 
Due to the high computational complexity of \mrpp, there has 
been a lack of complete algorithms that provide concurrent guarantees 
on computational efficiency and solution optimality. 

%An important enabling element of this work is a recent 
%graph theoretic algorithm introduced in \cite{yu2017constant}, which 
%provides a theoretical algorithm for solving the {\em discrete} 
%multi-body motion planning problem. 
%This paper introduces a complete algorithm,
%targeting at solving \mrpp\ in an obstacle-free environment in continuous domain
%when the initial and goal configurations are tightly packed.
%With a piece of recent work~\cite{yu2017constant} on \mrpp\ as a sub-routine,
%our algorithm achieves low polynomial computational time,
%and provides $O(1)$-approximated solutions.
%In addition, our algorithm can be applied to arbitrary dimensions.
%Further more, 
%it is able to meet various practitioners' 
%requirements with minor modifications. 
%An illustration of the algorithm is given in Fig.~\ref{fig:intro}.
\setlength\tabcolsep{3.0pt}
\def\scalefactor{0.75}
\definecolor{colora}{RGB}{244,101,40}
\definecolor{colorb}{RGB}{241,80,149}
\definecolor{colorc}{RGB}{80,168,227}
\definecolor{colord}{RGB}{184,214,4}
\definecolor{colore}{RGB}{127,66,51}
\def\introwidth{3.6}
\def\introheight{3.5}
\def\leftbtmx{0.825}
\def\leftbtmy{-0.2}
\begin{figure}[ht!]
\vspace*{3mm}
    \centering
		\hspace*{-2mm}
    \begin{tabular}{ccccc}
    \begin{tikzpicture}[scale=\scalefactor, every node/.style={scale=\scalefactor}]
        \draw [dashed] (\leftbtmx, \leftbtmy) rectangle (\introwidth + \leftbtmx, \introheight + \leftbtmy);
        \node [robot_node, fill=colora] at (2.95, 1.3) {};
        \node [robot_node, fill=colorb] at (2.35, 1.3) {};
        \node [robot_node, fill=colorc] at (2.65, 1.8) {};
        \node [robot_node, fill=colord] at (2.1, 1.9) {};
        \node [robot_node, fill=colore] at (3.2, 2.0) {};
        \draw [dashed] (2.65, 1.65) circle (0.9);
    \end{tikzpicture}
    &
    \begin{tikzpicture}[scale=\scalefactor, every node/.style={scale=\scalefactor}]
        \draw [dashed] (\leftbtmx, \leftbtmy) rectangle (\introwidth + \leftbtmx, \introheight + \leftbtmy);
        \node [robot_node, fill=colora] at (3.25, 1.0) {};
        \node [robot_node, fill=colorb] at (2.1, 1.0) {};
        \node [robot_node, fill=colorc] at (2.65, 2.2) {};
        \node [robot_node, fill=colord] at (1.8, 2.1) {};
        \node [robot_node, fill=colore] at (3.5, 2.2) {};
    \end{tikzpicture}
    &
    \begin{tikzpicture}[scale=\scalefactor, every node/.style={scale=\scalefactor}]
        \draw [dashed] (\leftbtmx, \leftbtmy) rectangle (\introwidth + \leftbtmx, \introheight + \leftbtmy);
        \foreach \j in {0,...,2}{% 
            \ifodd\j 
                \foreach \i in {0,...,2}{
                    \node[hexagon] at ({\j*3/2*\hexagonsize+50},{(2*\i-1)*sin(60)*\hexagonsize+15}) {};}        
            \else
                \foreach \i in {0,...,1}{
                    \node[hexagon] at ({\j*3/2*\hexagonsize+50},{\i*sin(60)*2*\hexagonsize+15}) {};}
            \fi
        \node [robot_node, fill=colord] at (2.1, 2.55) {};
        \node [robot_node, fill=colore] at (3.8, 2.55) {};
        \node [robot_node, fill=colora] at (3.0, 1.05) {};
        \node [robot_node, fill=colorb] at (2.3, 1.05) {};
        \node [robot_node, fill=colorc] at (3.0, 2.0) {};
        }
    \end{tikzpicture}
    \\
    {\footnotesize (a) Initial configuration} & 
    {\footnotesize (b) Expand} & 
    {\footnotesize (c) Assign}
    \vspace*{1.5mm} \\
    \begin{tikzpicture}[scale=\scalefactor, every node/.style={scale=\scalefactor}]
        \draw [dashed] (\leftbtmx, \leftbtmy) rectangle (\introwidth + \leftbtmx, \introheight + \leftbtmy);
        \foreach \j in {0,...,2}{% 
            \ifodd\j 
                \foreach \i in {0,...,2}{
                    \node[hexagon] at ({\j*3/2*\hexagonsize+50},{(2*\i-1)*sin(60)*\hexagonsize+15}) {};}        
            \else
                \foreach \i in {0,...,1}{
                    \node[hexagon] at ({\j*3/2*\hexagonsize+50},{\i*sin(60)*2*\hexagonsize+15}) {};}
            \fi
        \node [robot_node, fill=colora] at (3.2, 2.5) {};
        \node [robot_node, fill=colorb] at (2.1, 1.55) {};
        \node [robot_node, fill=colorc] at (3.0, 1.05) {};
        \node [robot_node, fill=colord] at (2.1, 0.55) {};
        \node [robot_node, fill=colore] at (2.1, 2.55) {};
        }
    \end{tikzpicture}
    &
    \begin{tikzpicture}[scale=\scalefactor, every node/.style={scale=\scalefactor}]
        \draw [dashed] (\leftbtmx, \leftbtmy) rectangle (\introwidth + \leftbtmx, \introheight + \leftbtmy);
        \node [robot_node, fill=colora] at (3.3, 2.2) {};
        \node [robot_node, fill=colorb] at (1.9, 1.7) {};
        \node [robot_node, fill=colorc] at (2.9, 1.05) {};   
        \node [robot_node, fill=colord] at (2.0, 0.9) {};
        \node [robot_node, fill=colore] at (2.3, 2.5) {};
    \end{tikzpicture}
    &
    \begin{tikzpicture}[scale=\scalefactor, every node/.style={scale=\scalefactor}]
        \draw [dashed] (\leftbtmx, \leftbtmy) rectangle (\introwidth + \leftbtmx, \introheight + \leftbtmy);
        \node [robot_node, fill=colora] at (3.0, 2.0) {};
        \node [robot_node, fill=colorb] at (2.3, 1.7) {};
        \node [robot_node, fill=colorc] at (2.8, 1.3) {};
        \node [robot_node, fill=colord] at (2.25, 1.15) {};
        \node [robot_node, fill=colore] at (2.5, 2.25) {};
        \draw [dashed] (2.65, 1.65) circle (0.9);
    \end{tikzpicture}
    \\
    {\footnotesize (d) Route} & 
    {\footnotesize (e) Contract} & 
    {\footnotesize (f) Goal configuration}\\
    \end{tabular}
    \caption{\label{fig:intro} 
    Illustration of the expand-assign-route part of the (\esrc) pipeline.
    (a) the initial configuration. (a)$\to$(b) expansion to create space 
		for moving the robots around. (b)$\to$(c) matching and snapping the robots onto 
		a grid structure. (c)$\to$(d) routing the robots on graphs. (d)$\to$(e)$\to$(f) 
		the contraction step, which is the reverse of the expand-assign steps. (f) 
		the goal configuration. 
    }
\end{figure}
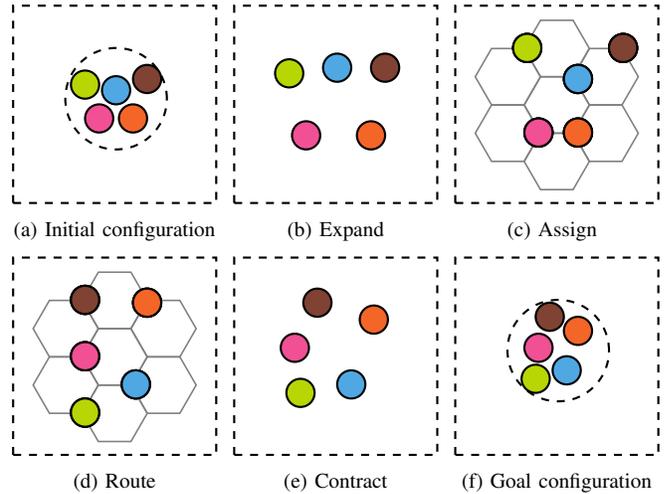

In this paper, we take initial steps to fill this long-standing 
optimality-efficiency gap in labeled multi-robot path planning in 
{\em continuous domains}, without the presence of obstacles. 
We denote this version of the \mrpp problem as \cmpp. 
Through the shift-expand-assign-route (\esrc) solution pipeline and 
the adaptation of a proper discrete robot routing algorithm 
\cite{yu2017expected}, we derive a first polynomial time algorithm 
for solving \cmpp in 2 or 3 dimensions that guarantees constant-factor 
solution optimality on both makespan (i.e., time to complete the 
reconfiguration task) and total distance (i.e., the sum of distances
traveled by all robots), in expectation. 
For both initial and goal configurations, the robots may be located
arbitrarily close to each other. 

This work brings forth two main contributions. First, we introduce 
a novel pipeline, \esrc, for solving \cmpp in an obstacle-free setting
with very high robot density. With a proper discrete routing algorithm, 
we show that \esrc comes with strong theoretical guarantees on both 
computational time and solution optimality, which significantly improves 
over the state-of-the-art, e.g.,~\cite{tang2018complete}. 
Secondly, we have developed practical implementations 
%(to be open sourced upon publication of the current work) 
of multiple \esrc-based 
algorithms, including highly efficient (but non-polynomial time) and 
near-optimal ones. As demonstrated through extensive simulation studies, 
\esrc can readily tackle problem instances with a thousand densely 
packed robots. We further demonstrate that \esrc can be applied to 
real multi-robot systems with differential constraints. 

%\section{Related Works}\label{sec:related}
\vspace*{1mm}
\noindent\textbf{Related work}. 
Variations of the multi-robot path planning problem have 
been actively studied for decades 
\cite{ErdLoz86,LavHut98b,GuoPar02,
blm-rvo,StaKor11,SolHal12,TurMicKum14,SolYu15,
wagner2015subdimensional,cohen2016improved,araki2017multi}. 
% \cite{ErdLoz86,LavHut98b,GuoPar02,BerOve05,
% blm-rvo,LunBer11,StaKor11,SolHal12,YuLav13STAR,TurMicKum14,SolYu15,
% wagner2015subdimensional,cohen2016improved,araki2017multi}. 
\mrpp finds applications 
in a diverse array of areas including assembly \cite{HalLatWil00}, evacuation 
\cite{RodAma10}, formation \cite{PodSuk04,SmiEgeHow08}, 
localization \cite{FoxBurKruThr00}, microdroplet manipulation
\cite{GriAke05}, object transportation \cite{RusDonJen95}, search and 
rescue \cite{JenWheEva97}, and human robot interaction \cite{knepper2012pedestrian}, 
to list a few. 
%\TODO{Maybe remove some references here? Just for the page limit.}
Methods for resolving \mrpp can be {\em centralized}, where a 
central planner dictates the motion of all moving robots, or {\em decentralized}, 
where the moving robots make individual decisions 
\cite{blm-rvo,snape2011hybrid,kim2015velocity}. Given the focus of 
the paper, our review of related literature focuses on centralized methods. 

In a centralized planner, with full access to the system 
state, global planning and control can be readily enforced to drive 
operational efficiency. Methods in this domain may be further classified 
as {\em coupled}, {\em decoupled}, or a mixture of the two. {\em Coupled 
methods} treat all robots as a {\em composite} robot residing in $\mathcal C
=\mathcal C^1 \times \ldots \times \mathcal C^n$, where $\mathcal C^i$ is the 
configuration space of a single robot \cite{schwartz1983piano2}
and subsequently the problem is subjected to standard single robot planning methods 
\cite{latombe2012robot,Lav06}, such as A* in discrete domains
\cite{HarNilRap68} and sampling-based methods in continuous 
domains \cite{SolSalHal14}. However, similar to high dimensional single 
robot problems \cite{reif1985complexity}, \mrpp is strongly NP-hard even for discs 
in simple polygons \cite{SpiYak84}. The hardness of the problem extends to the unlabeled 
case \cite{SolHal15} although complete and optimal algorithms for practical 
scenarios have been developed~\cite{YuLav12CDC,TurMicKum14,adler2015efficient,SolYu15}.

In contrast, {\em decoupled methods} first compute
a path for each individual robot while ignoring all other robots. Interactions 
between paths are only considered {\em a posteriori}. The delayed coupling 
allows more robots to be simultaneously considered but may induce the
loss of completeness \cite{sanchez2002using}. One can however achieve
completeness using optimal decoupling \cite{BerSnoLinMan09}. The classical
approach in decoupling is through {\em prioritization}, in which an order is 
forced upon the robots \cite{ErdLoz86}, potentially significantly affect 
the solution quality\cite{BerOve05}, though remedy is possible
to some extent \cite{bennewitz2002finding}. Sometimes the priority may 
be imposed on the fly, which involves plan-merging and deadlock 
detection \cite{saha2006multi}. 
Often, priority is used in conjunction with {\em velocity tuning}, which 
iteratively decides the velocity of lower priority robots \cite{KanZuc86}.
%Support for systems with differential constraints has also been added \cite{PenAke02}. 
%When a particular priority assignment 
%seems too arbitrary, Pareto optimal solutions can be considered \cite{GhrOkaLav05}.
%To speed up computation, pre-computation of paths and homotopy classes 
%can help \cite{green2010toward,knepper2009path,bhattacharya2010search}. 

In the past decade, significant progress has been made on {\em optimally} 
solving (labeled) \mrpp\ problems in discrete settings, in particular on 
discrete (graph-base) environments. Here, the feasibility problem 
is solvable in $O(|V|^3)$ time, in which $|V|$ is the 
number of vertices of the graph on which the robots reside 
\cite{AulMonParPer99,GorHas10,YuArxiv-1301-2342}. Optimal \mrpp\ 
remains computationally intractable in a graph-theoretic setting 
\cite{Gol84,Yu2015IntractabilityPlanar}, but the complexity has dropped 
from PSPACE-hard to NP-complete in many cases. On the algorithmic 
side, decoupling-based heuristics continue to prove to be useful~\cite{StaKor11,wagner2015subdimensional,boyarski2015icbs}. 
Beyond decoupling, other ideas have 
also been explored, often through reduction to other problems 
\cite{Sur12,erdem2013general,YuLav16TRO}.

\vspace*{1mm}
\noindent\textbf{Organization}. The rest of this paper is structured as follows.
%In Section~\ref{sec:related}, 
%we discuss the relevant work on multi-robot path and motion planning.
In Section~\ref{sec:problem}, we state the problem setting.
In Section~\ref{sec:bound}, after establishing the lower bound on achievable solution optimality, 
we introduce the \esrc\ at a high level.
In Section~\ref{sec:algorithm}, 
we describe the \esrc\ framework in full detail and provide a theoretical 
analysis of its key properties.
In Section~\ref{sec:evaluation}, 
we evaluate the performance of algorithms explained in this paper.
We conclude the paper in Section~\ref{sec:conclusion}.
\section{Problem Formulation}\label{sec:problem}
Consider an obstacle-free $k$ dimensional environment 
($k = 2$ or $k = 3$)
%\footnote{Our results fully generalize to arbitrary fixed dimensions, 
%however the content in this paper focuses on 2D and 3D cases.}
and a set of $n$ labeled spherical (discs for 2D) robots with uniform radius $r$.
The robots move with speed $\|v\|\in [0, 1]$. For a robot $i$,
let $x_i \in \rspace^k$ be its center; a joint configuration 
for all robot is then $X = (x_1, \dots, x_n) \in \rspace^{kn}$.
Let $\|x_i - x_j\|$ be the Euclidean distance between $x_i$ and $x_j$, $1 \le i < j \le n$,
collision avoidance requires $\|x_i - x_j\| > 2r$. 
Suppose the initial and goal configurations of the robots are $X^I$ and $X^G$, 
respectively, so that each robot $i$ is assigned an initial configuration
$x_i^I$ and a goal configuration $x_i^G$. 
Denoting the set of all non-negative real numbers as $\nonneg$, 
a {\em plan} is a function $\mathcal P: \nonneg \to \rspace^{kn}, t \mapsto X^t$
where for all $1 \le i \le n, t, \varepsilon \in \nonneg$, 
$\|x_i^t - x_i^{t + \varepsilon}\| \leq \varepsilon$. 
The continuous \mrpp\ instance 
studied in this paper is defined as follows.

\begin{problem}
    {\normalfont \bf Continuous Multi-Robot Path Planning (\cmpp)}.
    Given $\starts, \goals$, 
    find a plan $\mathcal{P}$ with $\mathcal{P}(0) = X^I$, 
    and $\mathcal{P}(t) = X^G$ for all $t \geq T$.
\end{problem}

We would like to find solutions with 
optimality assurance, according to the following objectives:
1) min-makespan: minimize the required time to finish the task, i.e. $T$;
2) min-total-distance: minimize the cumulative distance traveled by all robots.
In particular, we are interested in situations where $X^I$ and $X^G$ are very dense. 
Therefore, we assume both $X^I$ and $X^G$ are bounded by orthotopes (a 
term generalizing and encompassing rectangles and cuboids) 
with volume linear to the total volume of robots,
i.e., the volume of bounding orthotopes is $\Theta(n r^k)$.

The probability of a particular $X^I$ (and $X^G$) occurring is 
assigned as follows. First, a collision free {\em unlabeled} configuration 
is fixed, for which a probability is difficult to assign. Instead, we 
assume that any labeling of the fixed unlabeled configuration has equal 
probability, i.e., each $X^I$ (and $X^G$) has a probability of $1/n!$, 
which allows us to avoid arguing about continuous probability spaces. 

\begin{figure}[htp]
    \centering
    \begin{tikzpicture}[scale=1]
        \draw [dashed] (0,0) circle (0.9);
        \draw [dashed] (4,0) circle (1);
        \node [point_node] (oi) at (0,0) {};
        \node [point_node] (og) at (4,0) {};
        \node at (-0.8,0.9) {$c^I$};
        \node at (3.2,0.9) {$c^G$};
        \node at (0.2,-0.2) {$o^I$};
        \node at (3.8,0.2) {$o^G$};
        \draw [-, dashed] (oi) to (og);
        \node at (2,-0.2) {$d$};
        \foreach \nodeName/\nodeLocation in {
            1/{(4.5, -0.5)}, 2/{(4.6, 0.4)}, 3/{(3.32, -0.25)}}{
            \node [robot_node, fill=robot_red] 
            (\nodeName) at \nodeLocation {$x_{\nodeName}^G$};
        }
        \foreach \nodeName/\nodeLocation in {
            1/{(0.25, 0.6)}, 2/{(-0.64, 0)}, 3/{(-0.2, -0.6)}}{
			\node [robot_node] (\nodeName) at \nodeLocation {$x_{\nodeName}^I$};
        }
	\end{tikzpicture}
    \caption{\label{fig:problem} 
    An illustration of \cmpp\ when $n = 3$ and $k = 2$.
    The dashed circles are the bounding circles $c^I$ and $c^G$ with radii 
	$r^I$ and $r^G$, respectively. In this case, $r_b = r^G$.
    The blue and red discs show the initial and goal configurations of the robots.
    }
\end{figure}
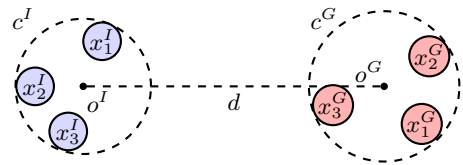

In the following sections, most illustrations and explanations assume $k = 2$.
Our results and proofs work for $k = 3$. 
For developing the results, we introduce some additional notations. 
Given $X^I$ (resp., $X^G$), let $c^I$ (resp., $c^G$) centered at 
$o^I$ (resp., $o^G$) be the smallest bounding circle that encloses 
all $n$ robots. I.e., let the radius of $c^I$ (resp., $c^G)$ be $r^I$ 
(resp., $r^G$), then for all $1 \le i \le n$, $\| o^I - x_i^I\| \le 
r^I - r$ (resp., $\| o^G - x_i^G\| \le r^G - r$). 
We will use $r_b = \max \{r^I, r^G\}$ and $d = \|o^I - o^G\|$ 
to state our optimality results. An illustrative \cmpp\ instance with
additional notations is given in Fig.~\ref{fig:problem}.

%We note that the derivation of the constant-factor optimality lower 
%bound requires an additional (mild) assumption that within the initial
%and goal configuration, $X^I$ and $X^G$, the distribution of robot labels 
%is by an arbitrary permutation.

\section{Main Results}\label{sec:bound}
%In this section, we outline the main algorithmic and analytical results 
%of the paper: the \esrc pipeline and its constant-factor optimality 
%guarantee for solving \cmpp. 

\subsection{Lower Bounds on Optimality}

In this sub-section, 
we establish lower bounds on the achievable optimality for \cmpp.
Since interchanging $X^I$ and $X^G$ does not affect optimality lower bounds, 
without loss of generality we make following assumptions: 
1) $r_b = r^G \geq r^I$;
2) $X^I$ is fixed, while the labeling of $X^G$ is by an arbitrary permutation of $\{1, \ldots, n\}$.
% The analysis of \cmpp\ optimality lower bounds is based on the $n!$ problem instances 
% generated by all the permutations of labeling in $X^G$.

\begin{lemma}\label{lem:single_robot_distance}
For any robot $1 \le i \le n$, the probability that 
$\|x_i^I - x_i^G\| = o(r_b)$ is less than or equal to $\frac{1}{4}$.
\end{lemma}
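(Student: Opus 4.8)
The plan is to turn the probabilistic claim into a counting statement and then control the count by a volume-packing estimate. Because the unlabeled goal configuration is fixed and only its labeling is a uniformly random permutation of $\{1,\dots,n\}$, the point $x_i^G$ is uniformly distributed over the $n$ fixed goal positions $\hat x_1,\dots,\hat x_n$. So, writing $\rho$ for the $o(r_b)$ threshold in the statement and $N_\rho$ for the number of $\hat x_j$ lying in the ball $B(x_i^I,\rho)$ of radius $\rho$ about $x_i^I$, we have $\Pr[\|x_i^I-x_i^G\|\le\rho]=N_\rho/n$. It therefore suffices to show $N_\rho=o(n)$, which is in particular at most $n/4$ once $n$ is large.

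To bound $N_\rho$, note that the goal positions lying in $B(x_i^I,\rho)$ are pairwise more than $2r$ apart (collision avoidance in $X^G$) and also lie in the goal bounding orthotope $B^G$; hence they sit in a box whose $m$-th side has length $\min\{2\rho,\ell_m\}$, where $\ell_1\le\dots\le\ell_k$ are the side lengths of $B^G$. Inflating each position by $r$ produces $N_\rho$ pairwise disjoint radius-$r$ balls contained in a box of sides $\min\{2\rho,\ell_m\}+2r$, and a volume comparison gives
\[
N_\rho\,V_kr^k\ \le\ \prod_{m=1}^{k}\bigl(\min\{2\rho,\ell_m\}+2r\bigr),
\]
where $V_k$ is the volume of the unit $k$-ball.

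The crux is to show the product on the right is $o(nr^k)$ whenever $\rho=o(r_b)$. I would use four facts: (a) $B^G$ contains the robot bodies, so $\ell_m\ge 2r$ for every $m$; (b) $\mathrm{vol}(B^G)=\prod_m\ell_m=\Theta(nr^k)$ by hypothesis; (c) $r=o(r_b)$, because $n$ disjoint radius-$r$ balls fit inside $c^G$, forcing $r_b=r^G\ge n^{1/k}r$; and (d) $\ell_k=\Omega(r_b)$, because $B^G$ lies inside a ball of radius $\tfrac{\sqrt k}{2}\ell_k$, whence $r_b=r^G\le \tfrac{\sqrt k}{2}\ell_k$. By (a), for each $m<k$ the corresponding factor is at most $\ell_m+2r\le 2\ell_m$; by (d), $\rho=o(\ell_k)$, so for large $n$ the factor for $m=k$ equals $2\rho+2r$; then, multiplying and invoking (b) and (d),
\[
\prod_{m=1}^{k}\bigl(\min\{2\rho,\ell_m\}+2r\bigr)\ \le\ (2\rho+2r)\,2^{k-1}\,\frac{\mathrm{vol}(B^G)}{\ell_k}\ =\ O\!\Bigl(nr^k\,\frac{\rho+r}{r_b}\Bigr).
\]
Dividing by $n\,V_kr^k$ gives $\Pr[\|x_i^I-x_i^G\|\le\rho]=N_\rho/n=O\bigl((\rho+r)/r_b\bigr)=o(1)$, using $\rho=o(r_b)$ and (c); this is below $1/4$ for all sufficiently large $n$.

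The step I expect to be the main obstacle is (d) together with (a): the estimate works only because the goal footprint is genuinely ``spread out'' over a region whose widest extent is $\Theta(r_b)$, rather than being pathologically thin. Were an orthotope of volume $\Theta(nr^k)$ allowed to have a side much smaller than $r$, one could line up almost all goal positions and fit them into an $o(r_b)$-ball around a well-chosen $x_i^I$, pushing the probability close to $1$; the requirement that $B^G$ contain the robot bodies (so that $\ell_m\ge 2r$) is exactly what rules that out. The remaining parts are routine bookkeeping, and the constant $1/4$ should be read asymptotically --- what is actually established is that the probability in question is $o(1)$.
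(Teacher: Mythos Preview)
Your argument is correct and follows the same packing idea as the paper: bound the number of goal centers that can fall within distance $o(r_b)$ of $x_i^I$ by a volume comparison against disjoint radius-$r$ balls, then divide by $n$. The paper's execution differs only cosmetically --- instead of intersecting a ball with $B^G$, it takes a full-width strip of the bounding rectangle whose extent along the long side is chosen so the strip has area exactly $\tfrac{n\pi r^2}{4}$, which delivers the constant $\tfrac14$ directly rather than via an $o(1)$ bound that is eventually below $\tfrac14$.
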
 

\begin{proof}
    \begin{figure}[htp]
		\vspace*{-2mm}
        \centering
        \begin{tikzpicture}[scale=0.7]
            \draw [fill=gray!50!, gray!50!] (3, 3) rectangle (5, 0); 
            \draw (0, 3) rectangle (6, 0);
            \node [point_node] (xP) at (4, 1.5) {};
            \node [robot_node] (xI) at (4, 0.5) {$x_i^I$};
            \draw [-, dashed] (xI) to (xP);
            \draw [-, dashed] (3, 0) to (3, 3);
            \draw [-, dashed] (5, 0) to (5, 3);
            \draw [-, dashed] (-0.5, 1.5) to (6.5, 1.5);
            % notations
            \node (a) at (3, -0.25) {$a$};
            \draw [semithick, -] (0, -0.15) to (0, -0.35);
            \draw [semithick, -] (6, -0.15) to (6, -0.35);
            \draw [semithick, ->, >=angle 60] (2.75, -0.25) to (0, -0.25);
            \draw [semithick, ->, >=angle 60] (3.25, -0.25) to (6, -0.25);
            \node (b) at (-0.75, 1.5) {$b$};
            \draw [semithick, -] (-0.85, 0) to (-0.65, 0);
            \draw [semithick, -] (-0.85, 3) to (-0.65, 3);
            \draw [semithick, ->, >=angle 60] (-0.75, 1.75) to (-0.75, 3);
            \draw [semithick, ->, >=angle 60] (-0.75, 1.25) to (-0.75, 0);
            \node (c) at (4, 3.25) {$c$};
            \draw [semithick, -] (3, 3.15) to (3, 3.35);
            \draw [semithick, -] (5, 3.15) to (5, 3.35);
            \draw [semithick, ->, >=angle 60] (3.75, 3.25) to (3, 3.25);
            \draw [semithick, ->, >=angle 60] (4.25, 3.25) to (5, 3.25);
        \end{tikzpicture}
			\vspace*{-2mm}
				\caption{Bounding rectangle of $X^G$ and the shaded area defined by $x_i^I$.}
        \label{fig:bounding-rectangles}
    \end{figure}

    We will demonstrate the number of goal configurations with distance $o(r_b)$ 
    to $x_i^I$ is less than or equal to $\frac{n}{4}$.

    Recall the assumption that $X^I$ and $X^G$ are bounded by orthotopes. 
    As shown in Fig.~\ref{fig:bounding-rectangles}, when $k = 2$, 
    the side lengths of the bounding rectangle of $X^G$ is denoted as $a$ and $b$.
    Without loss of generality, we assume $a \geq b$.
    The assumptions we made indicate that $a = \Theta(r^G) = \Theta(r_b)$, and $ab = \Theta(n r^2)$. 
    Now, consider a rectangle (the shaded area in Fig.~\ref{fig:bounding-rectangles}) 
    centered at the projection of $x_i^I$ on the longer central axis of rectangle $ab$, 
    with side lengths $b$ and $c = \frac{n \pi r^2}{4b}$.
    Since $bc = \frac{n \pi r^2}{4} = \Theta(n r^2)$, we deduce that $c = \Theta(a) = \Theta(r_b)$,
    and the number of goal configurations fully contained 
    in rectangle $bc$ is at most $\frac{n \pi r^2}{4} / \pi r^2 = \frac{n}{4}$.
    Therefore, for an arbitrary labeling in $X^G$, 
    the probability of $x_i^G$ to be fully contained in rectangle $bc$ is at most $\frac{1}{4}$, 
    which covers all the cases that $\|x_i^I - x_i^G\| \leq \frac{c}{2} - r = \Theta(r_b) - r$.
    Since $\pi r_b^2 \geq n \pi r^2$, $\Theta(r_b) - r = \Theta(r_b)$. 
    Thus, the probability that $\|x_i^I - x_i^G\| = o(r_b)$ is less than or equal to $\frac{1}{4}$.
    
    A similar  approach can be applied to the case when $k = 3$. 
    The dimension of the bounding orthotope of $X^G$ is then denoted as 
    $a \times b \times d$ ($a \geq b$, $a \geq d$), 
    and the area around $x_i^I$ is defined as $c \times b \times d$, 
    where $c = \frac{3}{16} \frac{n \pi r^3}{bd}$.
    Details are omitted due to the lack of space.
\end{proof}

We can then reason about optimality lower bounds of \cmpp\ 
based on the $n!$ problem instances 
generated by all the permutations of labeling in $X^G$.

\begin{lemma}\label{lem:lb_makespan}
The fraction of problem instances with $o(r_b)$ optimal makespan is at most $(\frac{1}{2})^{\frac{n}{2}}$.
\end{lemma}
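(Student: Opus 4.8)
The plan is to reduce the claim to a counting statement about the uniformly random labeling and then bound that count by a sequential ``reveal one robot at a time'' conditioning argument, feeding Lemma~\ref{lem:single_robot_distance} into each step. The starting observation is that makespan controls individual travel distance: in any plan $\mathcal P$ with makespan $T$ we have $\|x_i^I - x_i^G\| = \|x_i^0 - x_i^T\| \le T$ for every robot $i$, since $\|x_i^t - x_i^{t+\varepsilon}\| \le \varepsilon$. Hence if an instance has optimal makespan $o(r_b)$, then necessarily $\|x_i^I - x_i^G\| = o(r_b)$ for \emph{all} $i$ at once. So it suffices to upper bound the fraction of the $n!$ labelings for which every robot ends up within $o(r_b)$ of its start.

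Next I would recall, from the proof of Lemma~\ref{lem:single_robot_distance}, that for each robot $i$ (equivalently, each fixed start $x_i^I$) at most $n/4$ of the $n$ goal positions lie within distance $o(r_b)$ of $x_i^I$; call this ``close'' set of goal positions $G_i$, so $|G_i| \le n/4$. Modeling the random labeling as a uniformly random bijection $\sigma$ from robots to goal positions, the quantity to bound is $\Pr[\,x_i^G \in G_i \text{ for all } i\,]$, where $x_i^G$ is the goal position assigned to robot $i$ by $\sigma$.

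Then I would fix an arbitrary ordering of the robots $1,\dots,n$ and reveal $\sigma(1),\sigma(2),\dots$ one at a time, applying the chain rule. For every $k \le \lceil n/2 \rceil$, once $\sigma(1),\dots,\sigma(k-1)$ are fixed there remain $n-k+1 \ge n/2$ unused goal positions, of which at most $|G_k| \le n/4$ lie in $G_k$; hence the conditional probability that $x_k^G \in G_k$ is at most $(n/4)/(n-k+1) \le (n/4)/(n/2) = 1/2$. Multiplying these bounds over $k = 1,\dots,\lceil n/2\rceil$ and discarding the remaining factors (which are $\le 1$) yields $\Pr[\,x_i^G \in G_i\ \forall i\,] \le (1/2)^{\lceil n/2\rceil} \le (1/2)^{n/2}$, which is exactly the claimed fraction.

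The only genuine obstacle is the dependence between the events ``$x_i^G \in G_i$'' across robots: naively multiplying the per-robot bound $1/4$ from Lemma~\ref{lem:single_robot_distance} is not valid, since conditioning on early assignments changes the later probabilities. The sequential conditioning handles this, and restricting attention to the first $\lceil n/2\rceil$ robots is precisely what keeps the number of still-available goal positions at least $n/2$, so each conditional probability stays $\le 1/2$; one thereby pays a factor $2$ per robot relative to the ``independence'' heuristic, which is what degrades $(1/4)^{(\cdot)}$ to $(1/2)^{(\cdot)}$ and produces the stated $(1/2)^{n/2}$. A minor point to state cleanly is that Lemma~\ref{lem:single_robot_distance} bounds the close set within the \emph{full} goal configuration, but this trivially still upper bounds $|G_k \cap (\text{remaining positions})|$, so nothing is lost in the conditioning; one should also note the $o(r_b)$ threshold used is the concrete $\Theta(r_b)-r$ quantity from that proof, so ``$=o(r_b)$'' implies membership in $G_i$ for all large $n$.
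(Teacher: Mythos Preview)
Your proposal is correct and follows essentially the same approach as the paper: both observe that $o(r_b)$ makespan forces $\|x_i^I - x_i^G\| = o(r_b)$ for every $i$, then use a sequential chain-rule conditioning over the first $\sim n/2$ robots, bounding each conditional probability by $(n/4)/(n-k+1) \le 1/2$. Your write-up is in fact more careful than the paper's (you make the chain rule and the $n-k+1 \ge n/2$ cutoff explicit, and you note that the ``close'' set from Lemma~\ref{lem:single_robot_distance} only shrinks under conditioning), but the underlying argument is the same.
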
 

\begin{proof}
Note that an $o(r_b)$ min-makespan requires $\|x_i^I - x_i^G\| = o(r_b)$ to hold for all $1 \leq i \leq n$.
By Lemma~\ref{lem:single_robot_distance}, for robot $1$, the probability that 
$\|x_1^I - x_1^G\| = o(r_b)$ is at most $\frac{1}{4}$. 
Then, given that the assignments of goal configurations of different robots may be dependent, 
the probability that $\|x_2^I - x_2^G\| = o(r_b)$ is at most $\frac{n/4}{n - 1}$.
Reasoning over the first $\frac{n}{2}$ robots, the probability that $\|x_i^I - x_i^G\| = o(r_b)$ holds 
for all $1 \leq i \leq \frac{n}{2}$ is at most $(\frac{1}{2})^{\frac{n}{2}}$.
\end{proof}

\begin{lemma}\label{lem:lb_total_distance}
The average optimal total travel distance of all the $n!$ instances is $\Omega(n r_b)$.
\end{lemma}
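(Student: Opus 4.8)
The plan is to bound the optimal total distance of every instance from below by the sum of the straight-line start-to-goal distances, and then average over the $n!$ labelings of $X^G$ using linearity of expectation together with Lemma~\ref{lem:single_robot_distance}.

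First I would note that in any feasible plan for a fixed instance, robot $i$ follows a curve from $x_i^I$ to $x_i^G$ and hence travels a distance at least $\|x_i^I - x_i^G\|$; summing over $i$, the optimal total distance of that instance is at least $\sum_{i=1}^{n}\|x_i^I - x_i^G\|$. It therefore suffices to show that the average of $\sum_{i=1}^{n}\|x_i^I - x_i^G\|$ over the $n!$ equiprobable labelings of $X^G$ is $\Omega(n r_b)$. By linearity of expectation this average equals $\sum_{i=1}^{n}\mathbb{E}\big[\|x_i^I - x_i^G\|\big]$, where the expectation is over the uniformly random permutation, so it is enough to bound each term from below by $\Omega(r_b)$.

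Next I would extract from the proof of Lemma~\ref{lem:single_robot_distance} the slightly sharper fact it actually establishes: there is an absolute constant $c_0 > 0$ such that, for each robot $i$, at most $n/4$ of the $n$ goal slots lie within distance $c_0 r_b$ of $x_i^I$ (this is precisely the quantity $\tfrac{c}{2} - r = \Theta(r_b)$ appearing there). Consequently, under a uniformly random labeling, $\Pr\big[\|x_i^I - x_i^G\| \ge c_0 r_b\big] \ge \tfrac{3}{4}$, so $\mathbb{E}\big[\|x_i^I - x_i^G\|\big] \ge \tfrac{3}{4}\,c_0 r_b$. Summing over $i = 1,\dots,n$ yields $\sum_{i=1}^{n}\mathbb{E}\big[\|x_i^I - x_i^G\|\big] \ge \tfrac{3}{4}\,c_0\,n r_b = \Omega(n r_b)$, which is what we needed.

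The step I expect to require the most care is this last passage: Lemma~\ref{lem:single_robot_distance} is phrased in terms of the asymptotic class $o(r_b)$, whereas here I need a concrete threshold $c_0 r_b$ with an explicit constant, so I would appeal to the geometric construction inside that lemma (the rectangle of side $c = \Theta(r_b)$) rather than treating its statement as a black box. Two minor remarks: no independence among the robots' displacements is needed, since only linearity of expectation is invoked; and the argument carries over verbatim to $k = 3$ by using the $c \times b \times d$ region from Lemma~\ref{lem:single_robot_distance} in place of the planar rectangle.
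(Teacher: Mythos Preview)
Your proposal is correct and follows essentially the same approach as the paper: lower-bound the optimal total distance of each instance by $\sum_i \|x_i^I - x_i^G\|$, swap the order of summation (which is exactly your linearity-of-expectation step), and use Lemma~\ref{lem:single_robot_distance} to conclude that each term contributes $\Omega(r_b)$ on average. Your remark about needing the concrete threshold $c_0 r_b$ from inside the proof of Lemma~\ref{lem:single_robot_distance}, rather than the $o(r_b)$ statement, is a valid clarification that the paper's own proof glosses over.
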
 

\begin{proof}
Denote $x_{ij}^G$ as the goal configuration of robot $j$ in problem instance $i$,
the sum of optimal total travel distances of all the $n!$ instances is at least
$\sum_{i = 1}^{n!} \sum_{j = 1}^{n} \|x_j^I - x_{ij}^G\| 
= \sum_{j = 1}^{n} \sum_{i = 1}^{n!} \|x_j^I - x_{ij}^G\| 
\geq \frac{3}{4} n! \sum_{j = 1}^{n} \Theta(r_b)
= n! \, \Theta(n r_b)$.
% \begin{align*}
%     \sum_{i = 1}^{n!} \sum_{j = 1}^{n} \|x_i^I - x_{ij}^G\| 
%     & = \sum_{j = 1}^{n} \sum_{i = 1}^{n!} \|x_i^I - x_{ij}^G\| \\
%     & \geq \frac{3}{4} n! \sum_{j = 1}^{n} \Omega(r_b) 
%     = n! \Omega(n r_b) 
% \end{align*}
Thus, the average optimal total travel distance is $\Omega(n r_b)$.
\end{proof}

Taking $d$ into consideration, the optimality lower bounds on \cmpp 
are fully established by the following lemma. 

\begin{lemma}\label{lem:lower_bound}
For obstacle-free \cmpp\ with an arbitrary permutation of robot labeling in $X^I$ and $X^G$, 
the lower bound of makespan is $\Omega(r_b + d)$, in expectation;
the lower bound of total travel distance is $\Omega(n(r_b + d))$,
in expectation.
\end{lemma}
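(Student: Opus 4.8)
The plan is to assemble Lemma~\ref{lem:lower_bound} from the two bounds already in hand, Lemma~\ref{lem:lb_makespan} for makespan and Lemma~\ref{lem:lb_total_distance} for total distance, together with one elementary geometric observation that brings the center separation $d$ into play, and then to reconcile the $r_b$-term with the $d$-term via a short case split. So this is largely bookkeeping rather than a new argument.

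First I would record the deterministic contribution of $d$. For every robot $i$ and every labeling of $X^G$ we have $\|o^I - x_i^I\| \le r^I - r$ and $\|o^G - x_i^G\| \le r^G - r$, so the triangle inequality gives $\|x_i^I - x_i^G\| \ge d - r^I - r^G \ge d - 2r_b$. Since robot speeds are bounded by $1$, a plan needs time at least $\max\{0,\,d-2r_b\}$ just to move each robot to its goal, hence every one of the $n!$ instances has makespan $\ge \max\{0,\,d-2r_b\}$ and total distance $\ge n\max\{0,\,d-2r_b\}$; in particular these inequalities also hold for the expectation/average over the uniform labeling. Next I would invoke the earlier results: by Lemma~\ref{lem:lb_makespan} all but a $(\tfrac12)^{n/2}$ fraction of instances have makespan $\Omega(r_b)$, so the expected makespan is at least $\bigl(1-(\tfrac12)^{n/2}\bigr)\,\Omega(r_b)=\Omega(r_b)$, and Lemma~\ref{lem:lb_total_distance} directly gives average total distance $\Omega(n r_b)$.

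Finally I would merge the two. For makespan, the expectation is at least $\max\{\,\Omega(r_b),\ \max(0,\,d-2r_b)\,\}$, and for any fixed constant $a>0$ one has $\max\{a r_b,\ d-2r_b\}=\Omega(r_b+d)$: if $d\le 3r_b$ then $a r_b$ is already a constant fraction of $r_b+d$, while if $d>3r_b$ then $d-2r_b>d/3\ge (r_b+d)/4$. Running the same arithmetic with an extra factor of $n$ yields average total distance $\Omega(n(r_b+d))$, completing the lemma.

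I do not expect a genuinely hard step here; the one point to handle carefully is the interplay between the asymptotic $o/\Omega$ phrasing and the probabilistic statement. The $\Omega(r_b)$ makespan bound holds only on a $1-(\tfrac12)^{n/2}$ fraction of instances whereas the $d-2r_b$ bound is deterministic, so I should be explicit that \emph{both} are valid lower bounds on the same expectation and may therefore be combined by $\max$, and I should treat the constant hidden in the $\Omega(r_b)$ coming from Lemma~\ref{lem:lb_makespan} as fixed when performing the $d$ versus $r_b$ case split so that the final constant in $\Omega(r_b+d)$ is well defined.
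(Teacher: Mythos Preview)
Your proposal is correct and follows essentially the same approach as the paper: combine Lemmas~\ref{lem:lb_makespan} and~\ref{lem:lb_total_distance} for the $\Omega(r_b)$ contribution with the deterministic triangle-inequality bound $\|x_i^I-x_i^G\|\ge d-2r_b$ for the $d$ contribution. The paper's own proof is a one-line sketch that handles only the case $d=\omega(r_b)$ explicitly; your version spells out the case split $d\le 3r_b$ versus $d>3r_b$ and the passage from Lemma~\ref{lem:lb_makespan}'s fraction bound to an expectation bound, which is more careful but not a different idea.
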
 

\begin{proof}
This Lemma extends the results from Lemma~\ref{lem:lb_makespan} 
and Lemma~\ref{lem:lb_total_distance} by covering the case when $d = \omega(r_b)$: 
each robot must travel a distance of at least $d - 2r_b = \Theta(d)$.
\end{proof}

\subsection{\esrc\ and Upper Bounds on Optimality}

The \esrc\ pipeline, as partially illustrated in Fig.~\ref{fig:intro}, 
is outlined in Alg.~\ref{alg:outline}. 
It contains four essential steps: 
{\em (i)} shift, {\em (ii)} expand, {\em (iii)} assign, and {\em (iv)} route. 
At the start of the pseudo code (line~\ref{alg:outline_translate}), 
the {\em shift} step moves the robots  
so that the centers of the bounding circles $c^I$ and $c^G$ coincide.
Then, the {\em expand} and {\em assign} steps in 
lines~\ref{alg:outline_expand}-\ref{alg:outline_snap} do the 
following: {\em (i)} expand the robots so that they are sufficiently 
apart, {\em (ii)} generate an underline grid $G(V, E)$, and {\em (iii)} 
perform an injective assignment from robot labels to $V$. This yields 
a discrete \mrpp\ sub-problem. Line~\ref{alg:outline_route} 
solves the new \mrpp\ problem. Line~\ref{alg:outline_contract} 
then reverse the expand-assign to reach $X^G$. Note that the sub-plan
for the {\em contract} step is generated together with the 
expand and assign steps. In particular, only a single grid $G$ is 
created. Because of this, we do not include contract as a unique step of 
the \esrc pipeline.

\begin{algorithm}
\begin{small}
	\DontPrintSemicolon
    {\bf Shift}: move all robots so that $o^I$ coincides with $o^G$\;\label{alg:outline_translate}
    {\bf Expand}: increase clearance between the robots\;\label{alg:outline_expand}
    Construct a grid graph $G(V, E)$\;\label{alg:outline_graph}
    {\bf Assign}: move robots based on labeling $\{1, \ldots, n\} \xhookrightarrow{} V$\;\label{alg:outline_snap}
    {\bf Route}: solve the discrete \mrpp\ on $G$ \;\label{alg:outline_route}
    {\bf Contract}: reversed expand-assign steps\;\label{alg:outline_contract}
	\caption{The \tsm Pipeline}
	\label{alg:outline}
\end{small}
\end{algorithm}

We briefly analyze the optimality guarantee provided by \esrc; the details
will follow in the next section. For an arbitrary robot, the shift 
step brings a distance cost of $d$. The expand step incurs a cost of 
$O(r_b)$ and the assign step a cost of $O(r)$. This suggests 
that the contract step also incurs a cost of $O(r_b + r) = O(r_b)$. The route
step, using the algorithm introduced in \cite{yu2017expected}, 
incurs a cost of $O(r_b)$ as well. 
Adding everything together, the distance traveled by a single robot is $O(r_b + d)$. 
Since the \esrc\ pipeline solve the problem for all
robots simultaneously, the makespan is $O(r_b + d)$ and the total distance 
is $O(n(r_b + d))$. In viewing Lemma~\ref{lem:lower_bound}, we 
summarize the section with the following theorem, to be fully proved in the next section. 

\begin{theorem}\label{t:main}
For obstacle-free \cmpp\ with an arbitrary permutation of 
robot labeling in $X^I$ and $X^G$, a \esrc-based algorithm can compute constant-factor 
optimal solutions for both makespan and total traveled distance, in expectation.
\end{theorem}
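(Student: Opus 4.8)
The plan is to establish Theorem~\ref{t:main} by proving that a \esrc-based algorithm produces a plan in which every robot travels at most $O(r_b + d)$, which immediately bounds both makespan and total distance, and then comparing against the lower bounds of Lemma~\ref{lem:lower_bound}. I would first handle the \textbf{shift} step: translating all robots rigidly so that $o^I$ coincides with $o^G$ costs each robot exactly $d$, and crucially preserves collision-freeness and the bounding-circle radii. After the shift, both configurations lie inside a common circle of radius $r_b$ centered at the shared origin, so the remainder of the argument takes place in a bounded region whose size is $\Theta(r_b)$ up to the constant-factor expansion promised in the abstract.

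Next I would analyze the \textbf{expand} step. The goal is to move from the (arbitrarily dense) shifted initial configuration to a configuration where robots sit on (or can be snapped to) a grid $G$ with spacing $\Theta(r)$, while the overall footprint only grows by a constant factor, i.e.\ stays within a circle of radius $O(r_b)$. The key quantitative claim is that a radial (or coordinate-wise) rescaling that spreads the $n$ robots over an area $\Theta(n r^k)$ moves each robot by at most $O(r_b)$; this uses the assumption that the initial footprint already has volume $\Theta(n r^k)$, so the required expansion factor is bounded by a constant. The \textbf{assign} step then matches robot labels to nearby grid vertices via an injective assignment and snaps each robot onto its vertex; since the grid spacing is $\Theta(r)$ and the expanded robots are already $\Theta(r)$-separated, each snapping move is $O(r)$. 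I would argue collision-freeness throughout by scheduling these as a synchronized homotopy (or by a simple priority/serialization argument on the grid) so that no two robots' paths conflict. By symmetry, the \textbf{contract} step — which is just expand-assign run in reverse from the goal side — costs another $O(r_b + r) = O(r_b)$ per robot.

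The heart of the argument is the \textbf{route} step. Here I would invoke the discrete multi-robot routing algorithm of~\cite{yu2017expected} on the grid $G$: its input is the post-assign placement of the $n$ labels on $V$ and its target is the placement induced by the goal configuration's assignment. The cited result guarantees, in expectation over a uniformly random labeling, that each robot travels $O(\mathrm{diam}(G)) = O(r_b)$ on the grid (equivalently, makespan $O(r_b)$ and total distance $O(n r_b)$), provided $|V| = \Theta(n)$ and $G$ has the right expansion/girth-like properties. I would need to check that the grid constructed in the expand step satisfies the hypotheses of~\cite{yu2017expected} — in particular that the number of vertices is linear in $n$ and the aspect ratio is bounded — and that the uniform-random-labeling hypothesis of the present paper is exactly the randomness model the routing bound needs. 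This translation step is where I expect the main obstacle: one must verify that the random labeling of $X^G$ induces a random labeling on the grid vertices that still satisfies the independence/uniformity conditions required by~\cite{yu2017expected}, and that the constants do not blow up with the expansion factor.

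Finally I would assemble the pieces: per robot, distance $\le d$ (shift) $+\, O(r_b)$ (expand $+$ assign) $+\, O(r_b)$ (route, in expectation) $+\, O(r_b)$ (contract) $= O(r_b + d)$. Since all steps are executed synchronously across robots with unit speed bound $\|v\|\le 1$, the makespan is $O(r_b + d)$ and the total distance is the sum, $O(n(r_b+d))$, both in expectation. Comparing with Lemma~\ref{lem:lower_bound}, which gives expected lower bounds $\Omega(r_b + d)$ and $\Omega(n(r_b+d))$ respectively, the ratios are bounded by absolute constants, establishing the claimed constant-factor optimality. I would close by noting that every step is computable in polynomial time — the shift, expand, assign, and contract are linear in $n$, the grid construction is polynomial, and the routing subroutine of~\cite{yu2017expected} runs in polynomial time — so the overall pipeline is a polynomial-time algorithm, matching the paper's headline claim (though the polynomial-time assertion is not formally part of this theorem's statement).
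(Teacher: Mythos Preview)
Your proposal follows essentially the same decomposition as the paper: bound shift by $d$, expand/assign/contract by $O(r_b)$, and route (via \sag) by $O(r_b)$ per robot, then sum to get $O(r_b+d)$ makespan and $O(n(r_b+d))$ total distance and compare against Lemma~\ref{lem:lower_bound}. One clarification: the paper's \sag\ makespan bound is in fact \emph{deterministic} --- the recursion $O(m_w+m_h)+O(m_w/2+m_h)+\cdots=O(m_w+m_h)=O(r_b)$ holds for every instance --- so your worry about transferring the uniform-labeling hypothesis to the grid is unnecessary; the ``in expectation'' qualifier enters only through the lower bound of Lemma~\ref{lem:lower_bound}.
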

 
\begin{remark}
Between the two objectives, makespan and total distance, our guarantee on makespan 
is stronger. For a problem $p_i$, let its minimum possible makespan be $t_{\min}(p_i)$ 
and let the computed solution have makespan $t_a(p_i)$, we in fact establish that the 
quantity $\sum_i\frac{t_a(p_i)}{t_{\min}(p_i)}$ is a constant. 
\end{remark}
\section{The \esrc\ Pipeline in Detail}\label{sec:algorithm}
% We now present details of \esrc with full analysis. 

\subsection{Shifting the Configuration Center}

Lemma~\ref{lem:lower_bound} indicates that solving part 
of the problem with makespan $O(d)$ will not break the constant 
factor optimality guarantee. Therefore, the \esrc pipeline starts with a
shift step that translates the robots
without changing their relative positions, so that $o^I$ is moved to $o^G$
(see Fig.~\ref{fig:translation}).
During the process, the path travelled by robot $i$ is a straight line 
starts at $x_i^I$ and follows vector $\overrightarrow{o^I o^G}$.
Since the workspace is free of obstacles, 
the shift step is always feasible and yields an updated \cmpp with $d= 0$. 
This process incurs exactly $d$ makespan and $nd$ total travel distance.
The computational complexity is $O(n)$.

\begin{figure}[htp]
    \centering
    \begin{tikzpicture}[scale=1]
        \draw [dashed] (0,0) circle (0.9);
        \draw [dashed] (4,0) circle (1);
        \foreach \nodeName/\nodeLocation in {
            1/{(0.25, 0.6)}, 2/{(-0.64, 0)}, 3/{(-0.2, -0.6)}}{
			\node [robot_node] (\nodeName) at \nodeLocation {$x_{\nodeName}^I$};
        }
        \foreach \nodeName/\nodeLocation in {
            4/{(4.25, 0.6)}, 5/{(3.4, 0)}, 6/{(3.8, -0.6)}}{
            \node [robot_node, dashed, fill=robot_green] 
            (\nodeName) at \nodeLocation {};
        }
		\foreach \edgeFrom/\edgeTo in {1/4, 2/5, 3/6}{
			\draw [->] (\edgeFrom) to (\edgeTo);
        }
        \foreach \nodeName/\nodeLocation in {
            1/{(4.5, -0.5)}, 2/{(4.6, 0.4)}, 3/{(3.32, -0.25)}}{
            \node [robot_node, fill=robot_red] 
            (\nodeName) at \nodeLocation {$x_{\nodeName}^G$};
        }
        \node at (-0.8,0.9) {$c^I$};
        \node at (3.2,0.9) {$c^G$};
	\end{tikzpicture}
    \caption{\label{fig:translation}
        An illustration of shift step for the scenario in Fig.~\ref{fig:problem}.}
\end{figure}
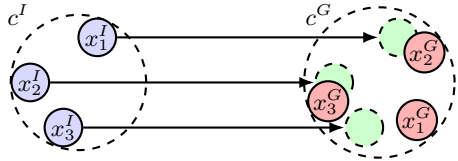

Without loss of generality, in the remaining part of this section, 
we assume $o^I = o^G$, and $d = 0$.

\subsection{Expand and Assign}
We then apply {\em expand} and {\em assign} steps to both $X^I$ and $X^G$. 
%but only use $X^I$ for demonstration purpose.
The objective is to convert \cmpp to a \mrpp on a grid graph $G = (V, E)$.
The edge length of $G$ is dependent on $r$, the radius of robots. We 
illustrate the steps using $X^I$; the same procedure is applied to $X^G$ in 
the reverse order. 

\subsubsection{Expand}
spread the robots
so that there is enough clearance between them to move them onto $G$. 
For doing so, we use simple {\em linear expansion} as defined below.

\begin{definition} {\bf Linear Expansion.}
    Denote $\lambda \geq 1$ as a {\em scaling factor}, 
    each robot $i$ then moves along the direction 
    of vector $\overrightarrow{o^I x_i^I}$ at maximum speed $1$,
    travels a distance $(\lambda - 1) \|o^I - x_i^I\|$.
\end{definition}

The value of $\lambda$, to be decided later, depends on grid property
and also the placement of robots in $X^I$. 
Now, we first show that linear expansion is collision free. 

\begin{lemma}
    Linear expansion is always collision-free.
\end{lemma}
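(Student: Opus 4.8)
The plan is to prove the stronger statement that linear expansion never decreases the distance between any two robots; since every pair starts with $\|x_i^I-x_j^I\|>2r$, collision-freeness is then immediate. Fix two robots $i,j$ and assume without loss of generality that $\rho_i:=\|o^I-x_i^I\|\le\rho_j:=\|o^I-x_j^I\|$ (they cannot both be $0$, as then $x_i^I=x_j^I=o^I$). Write $u=x_i^I-o^I$ and $v=x_j^I-o^I$. During linear expansion, the positions at time $t$ are $o^I+\alpha u$ and $o^I+\beta v$, where $\alpha=\mu_i(t)$ and $\beta=\mu_j(t)$ are the current scaling factors, each growing monotonically from $1$ to $\lambda$.

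The one structural fact I would extract from the unit-speed convention is a coupling between $\alpha$ and $\beta$. Since robot $i$ moves at speed $1$ until it has covered its prescribed distance $(\lambda-1)\rho_i$, we have $(\mu_i(t)-1)\rho_i=\min\{t,(\lambda-1)\rho_i\}$, and likewise for $j$. As $\rho_i\le\rho_j$ implies $(\lambda-1)\rho_i\le(\lambda-1)\rho_j$, comparing the two identities yields $(\alpha-1)\rho_i\le(\beta-1)\rho_j$, equivalently $\alpha\rho_i-\beta\rho_j\le\rho_i-\rho_j\le 0$.

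Next I would expand, using $\langle u,v\rangle=\tfrac12(\rho_i^2+\rho_j^2-\|u-v\|^2)$,
\[
\|\alpha u-\beta v\|^2-\|u-v\|^2=(\alpha-\beta)(\alpha\rho_i^2-\beta\rho_j^2)+(\alpha\beta-1)\|u-v\|^2 .
\]
Since $\alpha\beta\ge 1$ and $\|u-v\|\ge\rho_j-\rho_i\ge 0$ (reverse triangle inequality), the right-hand side is at least $(\alpha-\beta)(\alpha\rho_i^2-\beta\rho_j^2)+(\alpha\beta-1)(\rho_j-\rho_i)^2$, and a one-line expansion identifies this quantity with $(\alpha\rho_i-\beta\rho_j)^2-(\rho_i-\rho_j)^2$. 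By the coupling inequality, $|\alpha\rho_i-\beta\rho_j|=\beta\rho_j-\alpha\rho_i\ge\rho_j-\rho_i=|\rho_i-\rho_j|$, so this is $\ge 0$. Hence $\|x_i(t)-x_j(t)\|\ge\|x_i^I-x_j^I\|>2r$ for every $t$; since the pair and the time were arbitrary, linear expansion is collision-free. The degenerate case $\rho_i=0$ is covered by the same computation.

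The main obstacle is that the robots travel at a common speed rather than under a common scaling parameter, so $\alpha\ne\beta$ in general and one cannot simply invoke ``a homothety of ratio $\ge 1$ enlarges all distances.'' Everything hinges on pinning down the coupling between $\alpha$ and $\beta$ forced by the unit-speed rule and observing that it is exactly tight enough to handle the worst case (robots nearly collinear with $o^I$, where the reverse triangle inequality is tight); once that inequality is in hand, the remainder is routine algebra.
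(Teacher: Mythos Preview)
Your proof is correct, and like the paper you establish the stronger monotonicity claim that pairwise distances never decrease during linear expansion. The route, however, is genuinely different. The paper splits into two cases: while both robots are still moving it notes that each has traveled exactly $t$, so their radial distances are $a+t$ and $b+t$ with the angle at $o^I$ unchanged, and a direct law-of-cosines computation gives the squared-distance increment $2(1-\cos\alpha)(ta+tb+t^2)\ge 0$; once the closer robot has stopped, a short geometric argument shows the farther robot (already at a larger radius) only recedes further. Your argument instead treats both phases uniformly by encoding each robot's progress as a scaling factor and extracting the coupling inequality $(\alpha-1)\rho_i\le(\beta-1)\rho_j$ from the unit-speed rule; the rest is a single algebraic identity plus the reverse triangle inequality. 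The paper's version is shorter and more transparent for the ``both moving'' phase, while yours avoids the case split altogether and makes the ``one stopped'' phase fully rigorous without a separate geometric lemma.
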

\begin{proof}
    Because linear expansion with scaling factor $\lambda$ 
    applies different travel distances to robots,
    we need to eliminate potential collisions 
    between two moving robots,
    as well as between one moving robot and another stopped robot.

    A pair of moving robots $i$ and $j$ 
    always travel the same distance $t$ during time $[0, t]$
    because they both move at the maximum speed $1$.
    As shown in Fig.~\ref{fig:linear_expansion}(a),
    we denote $a = \|o^I - x_i^I\|, b = \|o^I - x_j^I\|$, 
    and $\alpha = \angle~x_i^I o^I x_j^I$.
    The increased distance between $i$, $j$ during linear expansion is 
    \begin{equation*}
        \begin{aligned}
                & \, (a + t)^2 + (b + t)^2 - 2(a + t)(b + t)\cos\alpha - \|x_i^I - x_j^I\| \\
            =   & \, 2(1 - \cos\alpha)(t a + t b + t^2) \geq 0.
        \end{aligned}
    \end{equation*}
    Since $\|x_i^I - x_j^I\| > 2r$,
    two moving robots cannot collide.

    The robots which are stopped earlier 
    will not collide with any other robots afterwards.
    See Fig.~\ref{fig:linear_expansion}(b),
    after robot $1$ stopped moving, the robots closer to $o^I$ 
    (robot $2$, $3$, in the shaded area) are already stopped.
    As other robots (robot $4$, $5$, $6$, centered out of the shaded area) 
    move away from the shaded area, 
    the distance between any of them and robot $1$ monotonically increases.
    So they will not collide with robot $1$ in the future.
\end{proof}

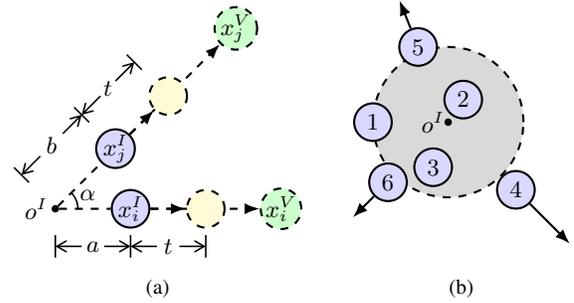
\begin{figure}[htp]
    \centering
    \begin{tabular}{ccc}
    \begin{tikzpicture}[scale=1]
        \node at (-0.25,0) {$o^I$};
        \draw [semithick] (0, 0) ++(0:.3) arc (0:45:.3);
        \node at (0.43, 0.18) {$\alpha$};
        \node [point_node] (ci) at (0, 0) {};
        \node [robot_node, dashed, fill=robot_green] (v1) at (3, 0) {$x_i^V$};
        \node [robot_node, dashed, fill=robot_green] (v2) at (2.4, 2.4) {$x_j^V$};
        \node [robot_node] (0) at (1, 0) {$x_i^I$};
        \node [robot_node] (1) at (0.8, 0.8) {$x_j^I$};
		\foreach \edgeFrom/\edgeTo in {ci/0, ci/1}{
			\draw [-, dashed] (\edgeFrom) to (\edgeTo);
        }
		\foreach \edgeFrom/\edgeTo in {0/v1, 1/v2}{
			\draw [->, dashed] (\edgeFrom) to (\edgeTo);
        }
        \foreach \nodeName/\nodeLocation in {
            2/{(2, 0)}, 3/{(1.5, 1.5)}}{
            \node [robot_node, dashed, fill=robot_yellow]
            (\nodeName) at \nodeLocation {};
        }
		\foreach \edgeFrom/\edgeTo in {0/2, 1/3}{
			\draw [->, dashed] (\edgeFrom) to (\edgeTo);
        }
        \node (a) at (0.5, -0.5) {$a$};
        \node (da) at (1.5, -0.5) {$t$};
        \draw [semithick, ->, >=angle 60] (0.35, -0.5) to (0, -0.5);
        \draw [semithick, ->, >=angle 60] (0.65, -0.5) to (1, -0.5);
        \draw [semithick, ->, >=angle 60] (1.35, -0.5) to (1, -0.5);
        \draw [semithick, ->, >=angle 60] (1.65, -0.5) to (2, -0.5);
        \draw [semithick, -] (0, -0.35) to (0, -0.65);
        \draw [semithick, -] (1, -0.35) to (1, -0.65);
        \draw [semithick, -] (2, -0.35) to (2, -0.65);
        \draw [semithick, -] (-0.35, 0.35) to (-0.55, 0.55);
        \draw [semithick, -] (0.8-0.35, 0.8+0.35) to (0.8-0.55, 0.8+0.55);
        \draw [semithick, -] (1.5-0.35, 1.5+0.35) to (1.5-0.55, 1.5+0.55);
        \node (b) at (0.4-0.45, 0.4+0.45) {$b$};
        \node (da) at (1.15-0.45, 1.15+0.45) {$t$};
        \draw [semithick, ->, >=angle 60] 
        (0.55-0.45, 0.55+0.45) to (0.8-0.45, 0.8+0.45);
        \draw [semithick, ->, >=angle 60] 
        (0.25-0.45, 0.25+0.45) to (-0.45, 0.45);
        \draw [semithick, ->, >=angle 60] 
        (1.25-0.45, 1.25+0.45) to (1.5-0.45, 1.5+0.45);
        \draw [semithick, ->, >=angle 60] 
        (1.05-0.45, 1.05+0.45) to (0.8-0.45, 0.8+0.45);
	\end{tikzpicture}
        &&
    \begin{tikzpicture}[scale=1]
        \node [robot_node, minimum size=2cm, fill=gray!30!, dashed]
        (ci) at (0, 0) {};
        \foreach \nodeName/\nodeLocation in {
            1/{(-1, 0)}, 2/{(0.2, 0.3)}, 3/{(-0.2, -0.6)}, 4/{(0.9, -0.9)},
5/{(-0.4, 1)}, 6/{(-0.8, -0.8)}}{
			\node [robot_node] (\nodeName) at \nodeLocation {${\nodeName}$};
        }
        \foreach \nodeName/\nodeLocation in {
            7/{(1.75, -1.75)}, 8/{(-0.7, 1.75)}, 9/{(-1.4, -1.4)}}{
			\node (\nodeName) at \nodeLocation {};
        }
		\foreach \edgeFrom/\edgeTo in {4/7, 5/8, 6/9}{
			\draw [->] (\edgeFrom) to (\edgeTo);
        }
        \node at (-0.2,0) {$o^I$};
        \node [point_node] (oi) at (0,0) {};
	\end{tikzpicture}
        \\
        {\footnotesize (a)} && {\footnotesize (b)} \\
    \end{tabular}
    \caption{\label{fig:linear_expansion} Linear expansion.
            (a) Any two moving robots are collision-free.
            (b) The robots stopped will not collide with any other robot afterwards.
    }
\end{figure}

\begin{corollary}\label{lem:clearance}
    The distance between the center of any two robots
    after linear expansion is scaled up by a factor of $\lambda$.
\end{corollary}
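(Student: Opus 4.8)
The plan is to recognize that linear expansion is exactly a homothety (central dilation) of the whole configuration about the center $o^I$ with ratio $\lambda$, and then invoke the elementary fact that a homothety of ratio $\lambda$ scales every pairwise distance by $\lambda$.

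First I would write the post-expansion position of each robot in closed form. By the definition of linear expansion, robot $i$ moves from $x_i^I$ along the unit vector $(x_i^I - o^I)/\|x_i^I - o^I\|$ (the direction of $\overrightarrow{o^I x_i^I}$) for a distance $(\lambda-1)\|o^I - x_i^I\|$, so its new position $x_i^V$ satisfies
\[
x_i^V = x_i^I + (\lambda - 1)\|o^I - x_i^I\|\cdot \frac{x_i^I - o^I}{\|x_i^I - o^I\|} = o^I + \lambda\,(x_i^I - o^I),
\]
and when $x_i^I = o^I$ the robot does not move, so the formula holds in that degenerate case as well. Then for any two robots $i, j$ I would subtract and take norms:
\[
\|x_i^V - x_j^V\| = \big\|\lambda(x_i^I - o^I) - \lambda(x_j^I - o^I)\big\| = \lambda\,\|x_i^I - x_j^I\|,
\]
which is precisely the claim. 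The identical computation, with $o^G$ and $x_i^G$ in place of $o^I$ and $x_i^I$, handles the expansion applied to $X^G$.

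There is essentially no obstacle here: the argument is immediate once the homothety formula is established. The only points requiring minor care are matching the procedural phrasing of the definition (``travel a distance $(\lambda-1)\|o^I-x_i^I\|$ in the direction $\overrightarrow{o^I x_i^I}$'') to the closed form $x_i^V = o^I + \lambda(x_i^I - o^I)$, and noting the degenerate case of a robot located exactly at $o^I$. Collision-freeness of the intermediate motion is not needed for this corollary, since it concerns only the final configuration, and it has in any case already been handled by the preceding lemma.
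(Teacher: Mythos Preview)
Your proof is correct and is essentially the same argument as the paper's, just phrased in vector-algebraic terms rather than via similar triangles. The paper simply notes that $\lambda = \|o^I - x_i^V\|/\|o^I - x_i^I\| = \|o^I - x_j^V\|/\|o^I - x_j^I\|$ and concludes $\lambda = \|x_i^V - x_j^V\|/\|x_i^I - x_j^I\|$ by similarity; your closed form $x_i^V = o^I + \lambda(x_i^I - o^I)$ makes this homothety explicit and yields the same conclusion immediately.
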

\begin{proof}
    See Fig.~\ref{fig:linear_expansion}(a), 
    assume $x_i^V, x_j^V$ are the configurations of 
    robot $i, j$ after linear expansion, respectively.
    Because $\lambda = \|o^I - x_i^V\| / \|o^I - x_i^I\| = \|o^I - x_j^V\| / \|o^I - x_j^I\|$,
    $\lambda = \|x_i^V - x_j^V\| / \|x_i^I - x_j^I\|$.
\end{proof}

During linear expansion, 
because the travel distance of a robot is at most $(\lambda - 1) (r_b - r)$, 
the makespan is $O(\lambda r_b)$ and the total travel distance is $O(n \lambda r_b)$.

%\TODO{This remark seems not so important and not relevant to our main results.}
\begin{remark}
    Depending on the specific problem instance, expansion could 
    be achieved through multiple ways, as long as the robots move in a 
    collision-free manner. That is, expand can be done in a smarter way.
    However (intuitively and as we will show), linear expansion with a 
    small constant $\lambda$ is sufficient for the \esrc analysis 
    to carry through.
\end{remark}

\subsubsection{Assign}

After expansion, we generate a regular grid graph $G$ 
in an area which covers all robots.
Here, we use hexagonal grid for illustration due to its capability 
for accommodating high robot density and natural support for certain
differential constraints~\cite{yu2018effective}. 
In this case, to ensure the robots can move along edges in $G$ 
without collisions, the edge/side length $\ell$ should be greater than 
$(4/\sqrt{3}) r$~\cite{yu2018effective}. We first establish the 
number of vertices in $G$ so that $G$ fully covers the expanded $X^I$ and $X^G$. 

\begin{lemma}\label{lem:num_vertices}
    The number of vertices in $G$ is $O(\lambda^k r_b^k / \ell^k)$.
\end{lemma}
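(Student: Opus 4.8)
The plan is to turn the claim into a volume count: bound the volume of the region that $G$ must cover, then divide by the volume each grid vertex "occupies." First I would use the reduction of the previous subsection, so that $o^I = o^G$, and invoke Corollary~\ref{lem:clearance}: linear expansion with factor $\lambda$ scales every pairwise distance, and in particular every distance to the common center $o^I$, by exactly $\lambda$. Consequently the expanded copy of $X^I$ lies inside the ball of radius $\lambda r^I \le \lambda r_b$ about $o^I$, and likewise the expanded copy of $X^G$ lies inside the ball of radius $\lambda r^G \le \lambda r_b$; their union fits inside a single ball $B$ of radius $\lambda r_b$ centered at $o^I = o^G$.

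Next I would argue that $G$ only has to cover a region of volume $O((\lambda r_b)^k)$. The grid is laid out over an orthotope containing $B$, enlarged by a constant number of cells so that the assign step can map each expanded robot to a distinct vertex of $G$ within distance $O(\ell)$ of it (the precise assignment is carried out later; here only the size of the padded region matters). Such an orthotope has side length $2\lambda r_b + O(\ell)$. Since the bounding-circle radius obeys $\pi r_b^k = \Omega(n\, r^k)$ (as used in the proof of Lemma~\ref{lem:single_robot_distance}), we have $r_b = \Omega(r) = \Omega(\ell)$ — recall $\ell$ is a fixed constant multiple of $r$ — and, because $\lambda \ge 1$, the additive $O(\ell)$ term is absorbed. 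Hence the region to be covered, even with its $O(\ell)$-padding, has volume $O((\lambda r_b)^k)$.

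Finally I would exploit the regularity of the grid. In a hexagonal grid with side length $\ell$ (or its $3$D analogue), distinct vertices are at distance $\Omega(\ell)$ apart, so the balls of radius $\Theta(\ell)$ centered at the vertices of $G$ are pairwise disjoint, each of volume $\Theta(\ell^k)$, and each contained in the $O(\ell)$-neighborhood of $B$. Summing volumes gives $|V|\cdot \Theta(\ell^k) \le \mathrm{vol}\!\left(B \oplus B_{O(\ell)}\right) = O((\lambda r_b)^k)$, hence $|V| = O(\lambda^k r_b^k/\ell^k)$, which is the claim. The argument is identical for $k=3$.

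I expect the only real subtlety to be the second step: ensuring that the constant-cell padding required so that every robot can later be snapped onto a nearby vertex does not affect the asymptotics. This reduces to the elementary estimate $r_b = \Omega(\ell)$, after which the rest is a routine packing/volume count.
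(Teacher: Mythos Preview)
Your proof is correct and follows essentially the same volume-counting approach as the paper, though you are considerably more thorough: the paper's argument is a two-sentence observation that after linear expansion the robots fit in an orthotope of side $O(\lambda r_b)$, so a regular grid with spacing $\ell$ laid over it has $O((\lambda r_b/\ell)^k)$ vertices. Your packing argument and explicit handling of the $O(\ell)$ padding (via $r_b = \Omega(\ell)$) are sound but more than the paper deems necessary.
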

\begin{proof}
    After linear expansion, the side length of 
    the smallest bounding orthotope of all robots is not larger than $\lambda r_b$.
    If we construct $G$ in this area, then $|V| = O(\lambda^k r_b^k / \ell^k)$.
\end{proof}

The next step is to place the robots on $V$.
This is done by {\em assign} each robot to the nearest vertex,
and execute straight line paths at the full speed $1$.
For a given $X^I$, we define
$d_{min} = \min_{1\le i < j \le n} \|x_i^I - x_j^I\|.$    

\begin{lemma}
    When $\lambda \geq 2\ell / d_{min}$, the following statements holds:
    (i) the mapping between robot labels $\{1, \ldots, n\}$ and $V$ is injective;  
    (ii) the assigned paths are collision-free.
\end{lemma}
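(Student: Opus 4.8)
The plan is to handle the two claims separately, since they correspond to two distinct geometric requirements on the scaling factor $\lambda$. For claim (i), injectivity of the label-to-vertex map, I would argue that when the expanded robots are pairwise far enough apart, no two of them can snap to the same grid vertex. Concretely, after linear expansion with factor $\lambda$, Corollary~\ref{lem:clearance} gives that the distance between any two robot centers is exactly $\lambda$ times the original, hence at least $\lambda d_{min}$. Each robot is assigned to its nearest vertex of the hexagonal grid $G$; the nearest vertex lies within some covering radius $\rho$ of the grid (half the long diagonal of a grid cell, a fixed multiple of $\ell$). If two robots mapped to the same vertex $v$, their centers would both lie within $\rho$ of $v$, forcing their mutual distance to be at most $2\rho = \Theta(\ell)$. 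So it suffices that $\lambda d_{min} > 2\rho$, i.e. $\lambda = \Omega(\ell/d_{min})$; the stated threshold $\lambda \ge 2\ell/d_{min}$ is of this form once the exact value of $\rho$ for the hexagonal grid is plugged in.

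For claim (ii), collision-freeness of the assign paths, I would use the same "each robot travels a straight segment at unit speed into its nearest vertex" structure that appeared in the expand analysis. The key observation is that every such path is entirely contained in the closed ball of radius $\rho$ around the target vertex $v$; since distinct robots target distinct vertices (by claim (i)), and distinct grid vertices are separated by at least $\ell$, two robots heading to different vertices stay in disjoint regions provided $\ell$ is large enough relative to $r$ — which is exactly the grid edge-length condition $\ell > (4/\sqrt{3}) r$ already imposed. For robots whose target is the same vertex the situation does not arise. One still needs to check the time-parametrization: robots finish at different times (travel distances differ), but a robot that has stopped at its vertex cannot be hit afterward because any still-moving robot is confined to the ball around a \emph{different} vertex, which is disjoint from a clearance-$r$ neighborhood of the stopped robot's vertex. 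This mirrors the "stopped robots are safe" argument in the proof that linear expansion is collision-free.

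I would order the steps as: (1) recall the covering radius $\rho$ of the hexagonal grid in terms of $\ell$; (2) from Corollary~\ref{lem:clearance}, lower-bound the post-expansion pairwise center distance by $\lambda d_{min}$; (3) derive injectivity by contradiction from $\lambda d_{min} > 2\rho$, which the hypothesis $\lambda \ge 2\ell/d_{min}$ implies; (4) for the paths, show each assign segment lies in the $\rho$-ball of its target vertex, then combine vertex separation $\ell$ with the robot radius bound $\ell > (4/\sqrt 3)r$ to conclude disjointness of the swept regions for robots with distinct targets; (5) dispatch the moving-vs-stopped case as above. The main obstacle I anticipate is step (4): verifying that the \emph{swept tubes} (segments thickened by the robot radius $r$), not merely the segment endpoints, remain disjoint for robots targeting adjacent grid vertices — this needs the precise relationship between $\ell$, $\rho$, and $r$ for the hexagonal lattice, and is where the constant $4/\sqrt 3$ earns its keep. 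Everything else is bookkeeping on top of Corollary~\ref{lem:clearance} and the earlier collision-free expansion argument.
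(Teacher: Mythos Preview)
Your argument for claim (i) is essentially the paper's: post-expansion centers are $\ge \lambda d_{min} \ge 2\ell$ apart, so each robot owns a radius-$\ell$ disk, every such disk contains a grid vertex, and nearest-vertex assignment is injective. Fine.

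The gap is in claim (ii). Your step (4) proposes to confine each assign segment to the $\rho$-ball of its target vertex and then argue these balls are disjoint. But for the hexagonal grid the covering radius is $\rho=\ell$, and adjacent vertices are exactly $\ell$ apart, so the two $\rho$-balls overlap substantially (their intersection has diameter $\sqrt{3}\,\ell$). Containment in those balls gives you nothing about separation of the segments, and no amount of shrinking by $r$ via $\ell>(4/\sqrt 3)r$ fixes this: the balls themselves already meet. You flag this as ``the main obstacle,'' but the proposal does not supply the geometric fact that resolves it.

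What the paper does instead is compute the segment-to-segment distance directly. From (i) one has $\|x_i^V-v_i\|\le\ell$, $\|x_i^V-v_j\|\ge\ell$, and $\|v_i-v_j\|\ge\ell$: both endpoints of the segment $x_i^V v_i$ are at distance $\ge\ell$ from $v_j$, and the segment has length $\le\ell$. An elementary calculation then shows the entire segment stays at distance $\ge(\sqrt 3/2)\ell$ from $v_j$ (worst case: both endpoints at distance exactly $\ell$, segment length exactly $\ell$, foot of perpendicular at the midpoint). Symmetrically for the other pair. Combining, the two segments are at distance $\ge(\sqrt 3/2)\ell>(\sqrt 3/2)\cdot(4/\sqrt 3)r=2r$, so the swept disks never meet regardless of timing---no separate moving-vs-stopped case is needed. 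This $(\sqrt 3/2)\ell$ bound, derived from the \emph{segment length} constraint together with the endpoint distances, is the ingredient your plan is missing.
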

\begin{proof}
    {\em (i)} When $\lambda \geq 2\ell / d_{min}$, 
    Corollary~\ref{lem:clearance} implies that after linear expansion,
    robots are located in non-overlapping circles 
    with radius $\ell$ (see Fig.~\ref{fig:match}(a)).
    Hence, there must be at least one vertex in $V$ lands 
    inside or on the border of each circle,
    which is assigned to the robot in the circle. 
    This ensures the injective mapping between robot labels and $V$.

    {\em (ii)} As it is shown in Fig~\ref{fig:match}(b), 
    we denote the configurations of robot $i, j$ after expansion as $x_i^V, x_j^V$, 
    and the vertices assigned to the two robots as $v_i$, $v_j$, respectively. 
    From {\em (i)}, we deduce that $\|x_i^V - v_i\| \leq \ell, 
    \|x_i^V - v_j\| \geq \ell$, and $\|x_i^V - x_j^V\| \geq 2\ell$. 
    Since $\ell$ denote edge length in $G$, $\|v_i - v_j\| \geq \ell$.
    These inequations imply that the distance (the red dashed line)
    from $v_j$ to segment $x_i^V v_i$ is at least $(\sqrt{3}/2) \ell$.
    The above statements also holds for $v_i$ and $x_j^V v_j$.  
    Thus, the shortest distance between segments 
    $x_i^V v_i$ and $x_j^V v_j$ is 
    $(\sqrt{3}/2) \ell > (\sqrt{3}/2) (4/\sqrt{3}) r = 2r$.
    No collisions between robot $i, j$.
\end{proof}

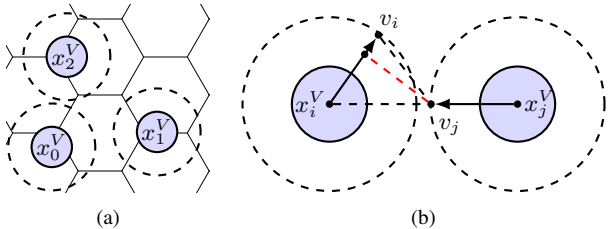
\begin{figure}[htp]
    \centering
    \vspace*{2mm}
    \begin{tabular}{ccc}
    \begin{tikzpicture}[scale=1]
        \fill [pattern=hexagons] (0,0) rectangle (2.7,2.5);
        \foreach \nodeName/\nodeLocation in {
            0/{(0.6, 0.6)}, 1/{(2, 0.8)}, 2/{(0.8, 1.8)}}{
            \node [robot_node] (\nodeName) at \nodeLocation {   $x_{\nodeName}^V$};
            \draw [dashed] \nodeLocation circle (\hexagonsize);
        }        
	\end{tikzpicture}
        &&
    \begin{tikzpicture}[scale=1]
        \node [robot_node, minimum size=1cm] (0) at (0, 0) {};
        \node [robot_node, minimum size=1cm] (1) at (2.5, 0) {};
        \draw [dashed] (0, 0) circle (\hexagonsize * 2);
        \draw [dashed] (2.5, 0) circle (\hexagonsize * 2);
        \node [point_node] (vi) at (0.65, 0.92) {};
        \node [point_node] (vm) at (0.65 * 0.72, 0.92 * 0.72) {};
        \node [point_node] (vj) at (1.35, 0) {};
        \node [point_node] (ri) at (0, 0) {};
        \node [point_node] (rj) at (2.5, 0) {};
        \draw [-, dashed, red] (vj) to (vm);
		\foreach \edgeFrom/\edgeTo in {vi/vj, ri/vj}{
			\draw [-, dashed] (\edgeFrom) to (\edgeTo);
        }
		\foreach \edgeFrom/\edgeTo in {ri/vi, rj/vj}{
			\draw [->] (\edgeFrom) to (\edgeTo);
        }
        \node at (0.8, 1.1) {$v_i$};
        \node at (1.6, -0.3) {$v_j$};
        \node at (-0.25, 0) {$x_i^V$};
        \node at (2.8, 0) {$x_j^V$};
	\end{tikzpicture}
        \\
        {\footnotesize (a)} && {\footnotesize (b)} \\
    \end{tabular}
    \caption{\label{fig:match} Assign.
            (a) The robots are in circles with radius $\ell$ which do not overlap.
            (b) No collision between the trajectories.
    }
\end{figure}

Denote $\varepsilon$ as a small positive real number, 
if we let $\ell = (4/\sqrt{3}) r + \varepsilon$, 
then $2\ell/ d_{min} \leq 2\ell / 2r = 4/ \sqrt{3} + \varepsilon / 2r$.
In this case, it is safe to say that when $\lambda = 4/\sqrt{3} + \varepsilon / 2r$,
the expand-assign can always be successfully performed.
$\lambda$ can be smaller when $d_{min}$ is larger.
The makespan for the whole expand-assign step is $O(r_b)$,
and the total travel distance is $O(n r_b)$.
In viewing Lemma~\ref{lem:num_vertices}, the computational time is $O(n |V|)$,
with the dominating factor being matching the robots with their nearest vertices.

An alternative for the expand-assign step is to 
consider matching robot labels to $V$ as 
an unlabeled \mrpp\ problem in continuous domain.
Because the makespan for expand-assign is already sufficient 
for achieving the desired optimality, 
we opt to keep the method fast and straightforward.

\begin{remark}
In the actual implementation, we also added an extra step that ``compacts''
the \mrpp instance, which makes the initial and goal configurations on 
the graph $G$ more concentrated. The extra step, based largely on 
\cite{YuLav12CDC}, does not impact the asymptotic performance of the 
\esrc pipeline but produces better constants. 
\end{remark}

\subsection{Discrete Robot Routing}

After applied expand and assign steps to both $X^I$ and $X^G$, 
a \cmpp instance becomes a discrete \mrpp instance. Let the converted
instance be $(G, V^I, V^G)$, i.e., $V^I = \{v_1^I,\ldots, v_n^I\}$ and 
$V^G = \{v_1^G,\ldots, v_n^G\}$ are the discrete initial and goal
configurations of the robots on $G$. 
To solve the \mrpp instance with polynomial time complexity and 
constant factor optimality guarantee, we apply the \textsc{SplitAndGroup}~(\sag) 
algorithm~\cite{yu2017expected}. The algorithm is intended for a 
fully occupied graph, but is also directly applicable to sparse 
graphs: we may place ``virtual robots'' on empty vertices. For 
completeness, we provide a brief overview of $\sag$ here. 

\sag\ is a recursive algorithm that splits a \mrpp instance 
into smaller sub-problems in each level of recursive call.
We demonstrate one recursion here as an illustration.
As shown in Fig.~\ref{fig:split}(a), 
{\em split} partitions the part of grid it works on into two areas 
(left side with a blue shade and right side with a red shade) 
along the longer side of this grid.
For each side, the algorithm finds all the robots currently in this side 
with goal vertices in the other side.
Here in Fig.~\ref{fig:split}, 
the blue (resp. red) robots have goal vertices in blue (resp. red) shade.
Then {\em group} moves every robot to the side 
where its goal vertex belongs to (see Fig.~\ref{fig:split}(b)).
This results in two sub-problems (the left one and the right one),
which could be treated by the next level of recursion call in parallel.

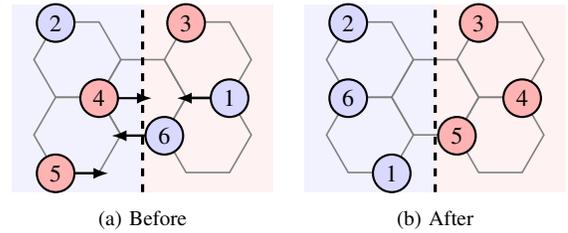
\begin{figure}[htp]
    \centering
    \vspace*{2mm}
    \begin{tabular}{ccc}
    \begin{tikzpicture}[scale=1]
        \fill [blue!5!] 
        (-1.5*\hexagonsize, -0.25) rectangle (\hexagonsize*1.5, 2.25);
        \fill [red!5!] 
        (\hexagonsize*1.5, -0.25) rectangle (\hexagonsize*4.5, 2.25);
        \foreach \j in {0,...,2}{% 
            \ifodd\j 
                \foreach \i in {1,...,1}{
                    \node[hexagon] at 
                    ({\j*3/2*\hexagonsize},{(2*\i-1)*sin(60)*\hexagonsize}) {};}        
            \else
                \foreach \i in {0,...,1}{
                    \node[hexagon] at 
                    ({\j*3/2*\hexagonsize},{\i*sin(60)*2*\hexagonsize}) {};}
            \fi
        }
        \draw [-, dashed, very thick] 
        (\hexagonsize*1.5, -0.25) to (\hexagonsize*1.5, 2.25);
        \draw [->] (\hexagonsize*2, 0.5) to (\hexagonsize*2-20, 0.5);
        \draw [->] (\hexagonsize*3.5, 1) to (\hexagonsize*3.5-20, 1);
        \draw [->] (\hexagonsize*0.5, 1) to (\hexagonsize*0.5+20, 1);
        \draw [->] (-\hexagonsize*0.5, 0) to (-\hexagonsize*0.5+20, 0);
        \foreach \nodeName/\nodeLocation in {
            6/{(\hexagonsize*2, 0.5)}, 1/{(\hexagonsize*3.5, 1)}, 2/{(-\hexagonsize*0.5, 2)}}{
            \node [robot_node] (\nodeName) at \nodeLocation {\nodeName};
        }   
        \foreach \nodeName/\nodeLocation in {
            3/{(\hexagonsize*2.5, 2)}, 4/{(\hexagonsize*0.5, 1)}, 5/{(-\hexagonsize*0.5, 0)}}{
            \node [robot_node, fill=robot_red] 
            (\nodeName) at \nodeLocation {\nodeName};
        }      
	\end{tikzpicture}
        &&
    \begin{tikzpicture}[scale=1]
        \fill [blue!5!] 
        (-1.5*\hexagonsize, -0.25) rectangle (\hexagonsize*1.5, 2.25);
        \fill [red!5!] 
        (\hexagonsize*1.5, -0.25) rectangle (\hexagonsize*4.5, 2.25);
        \foreach \j in {0,...,2}{% 
            \ifodd\j 
                \foreach \i in {1,...,1}{
                    \node[hexagon] at 
                    ({\j*3/2*\hexagonsize},{(2*\i-1)*sin(60)*\hexagonsize}) {};}        
            \else
                \foreach \i in {0,...,1}{
                    \node[hexagon] at 
                    ({\j*3/2*\hexagonsize},{\i*sin(60)*2*\hexagonsize}) {};}
            \fi
        }
        \draw [-, dashed, very thick] 
        (\hexagonsize*1.5, -0.25) to (\hexagonsize*1.5, 2.25);
        \foreach \nodeName/\nodeLocation in {
            6/{(-\hexagonsize*0.5, 1)}, 1/{(\hexagonsize*0.5, 0)}, 2/{(-\hexagonsize*0.5, 2)}}{
            \node [robot_node] (\nodeName) at \nodeLocation {\nodeName};
        }   
        \foreach \nodeName/\nodeLocation in {
            3/{(\hexagonsize*2.5, 2)}, 4/{(\hexagonsize*3.5, 1)}, 5/{(\hexagonsize*2, 0.5)}}{
            \node [robot_node, fill=robot_red] 
            (\nodeName) at \nodeLocation {\nodeName};
        } 
	\end{tikzpicture}
        \\
        {\footnotesize (a) Before} && {\footnotesize (b) After} \\
    \end{tabular}
    \caption{\label{fig:split} Before and after one recursive call of \sag.}
\end{figure}

The grouping procedure depends on the robots being able to ``swap'' locations
locally in constant time. On a hexagonal grid, swapping any two robots on a fully 
occupied {\em figure-8 graph} (Fig.~\ref{fig:8}) can be performed in 
constant time through an adaptation of the proof used in proving 
Lemma 1 of~\cite{yu2017expected}. As a larger grid is partitioned into 
smaller figure-8 graphs, parallel swaps become possible. 
Let $G$ be bounded in a rectangle of size $m_{w}\times m_{h}$ in which 
$m_w$ and $m_h$ are the width and height of the rectangle, then the 
first iteration of \sag requires a robot incur a makespan of 
$O(m_w + m_h)$ and this can be done in parallel for all robots. 
Recursively, 
the makespan upper bound of \sag\ (applied to our problem) is:
\begin{equation}
\begin{array}{ll}
 & \displaystyle O(m_w + m_h) + O(\frac{m_w}{2} + m_h) + O(\frac{m_w}{2} + \frac{m_h}{2}) + \dots \\
= & \displaystyle O(m_w + m_h) = O(\lambda r_b) = O(r_b). \notag
\end{array}
\end{equation}
The last equality is due to $\lambda$ being a small constant.

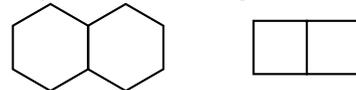
\begin{figure}[htp]
    \centering
    \begin{tabular}{
        >{\centering\arraybackslash}m{2cm}
        >{\centering\arraybackslash}m{0.5cm}
        >{\centering\arraybackslash}m{2cm}}
    \begin{tikzpicture}[scale=1]
        \node[hexagon, rotate=90, draw=black, thick] at (0, 0) {};
        \node[hexagon, rotate=90, draw=black, thick] at (1, 0) {};
    \end{tikzpicture}
    &&
    \begin{tikzpicture}[scale=0.708]
        \draw (0, 0) rectangle (1, 1);
        \draw (2, 0) rectangle (1, 1);
    \end{tikzpicture}
    \\
    \end{tabular}
    \caption{\label{fig:8} Figure-8 graphs in hexagonal grids and square grids.}
\end{figure}

\sag incurs $O(|V|^3) = O((\lambda^k r_b^k / l^k)^3) = O((\lambda^k n r^k / l^k)^3) = O(n^3)$ computational time.

\subsection{Performance Analysis of \tsm}

Summarizing the result from individual steps, for the general \esrc 
pipeline with the shift step, we conclude

\begin{theorem}\label{t:esrc}
    For obstacle-free \cmpp, \esrc computes solutions with $O(r_b + d)$
		makespan and $O(n(r_b + d))$ total distance in $O(n^3)$ computational time. 
\end{theorem}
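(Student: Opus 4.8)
The plan is to read Theorem~\ref{t:esrc} as the bottom line of the step-by-step analysis carried out in this section: each of shift, expand, assign, route, and contract has already been shown to produce a collision-free sub-plan together with explicit bounds on its makespan, total distance, and running time, so the proof consists of (a) checking that these sub-plans chain into one valid \cmpp\ plan and (b) adding the bounds.

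First I would pin down the free parameters so that a single grid works at both ends. Choose $\ell=(4/\sqrt3)\,r+\varepsilon$ with $\varepsilon=\Theta(r)$ a small constant, and $\lambda = 2\ell/\min\{d_{\min}^I,d_{\min}^G\}$, where $d_{\min}^I,d_{\min}^G$ are the minimum pairwise center distances in $X^I$ and $X^G$. Collision-freeness forces $d_{\min}^I,d_{\min}^G>2r$, so $\lambda\le 4/\sqrt3+\varepsilon/2r=\Theta(1)$. After the shift step $o^I=o^G$; by Corollary~\ref{lem:clearance} the expanded $X^I$ and the expanded $X^G$ each fit in an orthotope of volume $O(\lambda^k n r^k)$ about the common center, so one hexagonal grid $G$ over the union of these orthotopes serves both, and with this $\lambda$ the assignment is injective and collision-free (by the assign lemma) for $X^I$ and, run backwards, for $X^G$. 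By Lemma~\ref{lem:num_vertices} and the $\Theta(nr^k)$ volume assumption, $\ell=\Theta(r)$ gives $|V|=O(\lambda^k n)=O(n)$.

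Next I would concatenate and tally. The shift sub-plan is collision-free (no obstacles), ends at the shifted $X^I$, and costs makespan $d$, total distance $nd$, time $O(n)$. Expand is collision-free by the linear-expansion lemma with makespan $O(\lambda r_b)$ and total distance $O(n\lambda r_b)$; assign is collision-free by the assign lemma with makespan $O(r_b)$, total distance $O(nr_b)$, time $O(n|V|)=O(n^2)$, and it lands the robots on $V^I\subseteq V$. Routing by \sag~\cite{yu2017expected} on $(G,V^I,V^G)$, padding empty vertices with virtual robots, is collision-free, takes $V^I$ to $V^G$, and since $G$ lies in an $m_w\times m_h$ box with $m_w,m_h=O(\lambda r_b)$ its makespan telescopes as $O(m_w+m_h)+O(\tfrac{m_w}{2}+m_h)+\cdots=O(\lambda r_b)=O(r_b)$, its total distance is $O(nr_b)$, and its running time is $O(|V|^3)=O(n^3)$. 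Contract is the time-reversal of the (collision-free) expand--assign of $X^G$ on the same $G$, so it returns the robots from $V^G$ to $X^G$ with makespan $O(r_b)$ and total distance $O(nr_b)$. Consecutive sub-plans agree at their junction configurations, so the concatenation is a plan $\plan$ with $\plan(0)=X^I$ and $\plan(t)=X^G$ for $t$ beyond the sum of the step makespans. Summing, the makespan is $d+O(r_b)+O(r_b)+O(r_b)+O(r_b)=O(r_b+d)$ (this is the plan makespan because in every step all robots move simultaneously), the total distance is $nd+O(nr_b)=O(n(r_b+d))$, and the running time is $O(n)+O(n^2)+O(n^3)=O(n^3)$, dominated by \sag.

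I expect the genuinely delicate point --- as opposed to the additive bookkeeping --- to be the paragraph fixing parameters: exhibiting a single constant $\lambda$ and a single grid $G$ that simultaneously certify expand--assign for $X^I$ and its reverse for $X^G$, confirming that this forces $|V|=O(n)$ (so that $O(|V|^3)=O(n^3)$), and verifying that the \sag\ makespan recursion sums to $O(\lambda r_b)$ without an extra logarithmic factor. Everything else is mechanical aggregation of the five per-step analyses already established.
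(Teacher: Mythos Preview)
Your proposal is correct and follows essentially the same aggregation argument as the paper: the paper's own proof of Theorem~\ref{t:esrc} is literally the one-line ``Summarizing the result from individual steps\ldots'', which your write-up fleshes out with more care about fixing $\lambda$, $\ell$, and the single common grid $G$. Your explicit verifications that $|V|=O(n)$ (hence $O(|V|^3)=O(n^3)$) and that consecutive sub-plans agree at their junction configurations are exactly the bookkeeping the paper leaves implicit.
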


Theorem~\ref{t:main} follows Lemma~\ref{lem:lower_bound}
and Theorem~\ref{t:esrc}. 

\subsection{Generalization to a 3 Dimensional Workspace}
\esrc is ready to be generalized to 
a 3 dimensional workspace with some modifications to key parameters. 
The shift step can be applied to an arbitrary dimension.
In the expand and assign steps, suppose a cube grid is used, 
the edge length $l$ should be at least $(4 / \sqrt{2})r + \varepsilon$ 
to ensure that robots can move along edges without collisions.
A scale factor $\lambda = 4 / \sqrt{2} + \varepsilon / 2r$ 
is sufficient for assign to be collision-free.
For any fixed dimension, the guarantee of the \sag algorithm 
continuous to hold \cite{yu2017expected}. With these modifications, 
Theorem~\ref{t:esrc} and Theorem~\ref{t:main} holds for $k = 3$.

%\begin{proof}
    %The running time is $O(n)$ for expansion,
    %$O(\lambda^2 r_b^2 n)$ for matching and 
    %$O(\lambda^3 r_b^3 / l^3)$ for solving the \mrpp problem.
    %So the time complexity of our algorithm is $O(r_b^3 n)$.
%
    %The makespan of the solution is 
    %$O(d) + O(\lambda r_b) + O(\lambda) + O(\lambda r_b) = O(r_b + d)$.
    %The total travel distance of the solution is $O(n (r_b + d))$.
%\end{proof}

%\subsection{Alternative Methods to Solve the Discrete \mrpp}
%Other methods on discrete \mrpp\ could also be applied to \tsm.
%With flexible choice on sub-routines, 
%our algorithm can be adapted to different situations. 
%
%It is empirically observed that although \sag\ has scalability,
%the solutions generated are actually quite far from optimal 
%when the size of the problem is small.
%So we also use the optimal ILP based solver~\cite{YuLav16TRO} 
%as an alternative to solve the discrete \mrpp\ problem.

%\subsection{High Dimensions}
%We briefly discuss how the the \esrc pipeline readily generalized 
%to higher dimensions. The proofs of the shift, expand, and assign
%steps of \esrc readily generalizes to arbitrary dimensions. Also,
%for any fixed dimension, the guarantee of the \sag algorithm 
%continuous to hold \cite{yu2017constant}.

\begin{filecontents}{datan.dat}
N	EPSILON	D	LB	SAG_TIME	SAG_MAKESPAN	SAG_OPT	ILP_TIME	ILP_MAKESPAN	ILP_OPT	ILP_TIME2	ILP_MAKESPAN2	ILP_OPT2	ILP_TIME4	ILP_MAKESPAN4	ILP_OPT4	ILP_TIME8	ILP_MAKESPAN8	ILP_OPT8	ML
20	0	0	11.23279267	0.492	841.857	74.946	1.598	59.663	5.312	0.314	64.282	5.723	0.49	136.566	12.158	1.286	523.853	46.636	4.472135955
100	0	0	29.48133968	6.784	5749.351	195.017	1000	0	0	11.837	136.352	4.625	5.424	188.775	6.403	19.547	594.075	20.151	10
200	0	0	41.91560226	20.719	10645.645	253.978	1000	0	0	104.655	191.448	4.567	20.528	234.865	5.603	58.872	614.992	14.672	14.14213562
300	0	0	52.47116507	46.239	14972.868	285.354	1000	0	0	382.071	237.734	4.531	39.411	270.297	5.151	118.232	664.743	12.669	17.32050808
400	0	0	60.46489176	75.891	17618.325	291.381	1000	0	0	1000	0	0	70.938	302.436	5.002	202.764	678.869	11.227	20
500	0	0	67.81632376	121.807	20382.59	300.556	1000	0	0	1000	0	0	128.292	333.986	4.925	312.408	689.403	10.166	22.36067977
600	0	0	74.39692702	189.038	22480.012	302.163	1000	0	0	1000	0	0	183.367	357.566	4.806	428.429	699.127	9.397	24.49489743
700	0	0	81.21948246	276.226	24432.92	300.826	1000	0	0	1000	0	0	260.356	383.741	4.725	569.389	709.828	8.74	26.45751311
800	0	0	86.4573358	349.537	26652.648	308.275	1000	0	0	1000	0	0	1000	0	0	1000	0	0	28.28427125
900	0	0	91.75972366	494.457	28758.767	313.458	1000	0	0	1000	0	0	1000	0	0	1000	0	0	30
1000	0	0	97.76578439	715.576	29732.117	306.549	1000	0	0	1000	0	0	1000	0	0	1000	0	0	31.6227766
\end{filecontents}

\begin{filecontents}{datae.dat}
N	EPSILON	D	LB	SAG_TIME	SAG_MAKESPAN	SAG_OPT	ILP_TIME	ILP_MAKESPAN	ILP_OPT	ILP_TIME2	ILP_MAKESPAN2	ILP_OPT2	ILP_TIME4	ILP_MAKESPAN4	ILP_OPT4	ILP_TIME8	ILP_MAKESPAN8	ILP_OPT8	ML
30	0	0	14.85529905	0.677	1546.592	104.11	6.553	74.579	5.02	0.498	77.12	5.191	0.641	140.397	9.451	2.015	527.684	35.522	5.477225575
30	0.2	0	16.85941489	0.663	1491.63	88.475	6.202	75.967	4.506	0.517	77.122	4.574	0.651	139.938	8.3	2.032	525.377	31.162	5.477225575
30	0.4	0	18.38503974	0.663	1532.559	83.359	5.855	73.941	4.022	0.493	74.865	4.072	0.666	138.835	7.552	1.955	519.886	28.278	5.477225575
30	0.6	0	19.24763058	0.672	1549.635	80.51	5.712	70.001	3.637	0.495	72.542	3.769	0.634	134.896	7.008	2.001	523.568	27.202	5.477225575
30	0.8	0	21.09748158	0.661	1491.701	70.705	6.1	68.879	3.265	0.462	69.572	3.298	0.629	132.388	6.275	1.847	508.358	24.096	5.477225575
30	1	0	23.36424477	0.651	1443.947	61.802	5.214	68.699	2.94	0.515	70.315	3.01	0.641	132.9	5.688	1.935	516.26	22.096	5.477225575
30	1.2	0	24.83754662	0.687	1616.847	65.097	6.791	67.239	2.707	0.518	69.318	2.791	0.648	130.979	5.273	1.932	509.259	20.504	5.477225575
30	1.4	0	24.92519748	0.694	1694.681	67.991	5.458	63.551	2.55	0.492	65.86	2.642	0.649	129.369	5.19	1.929	506.032	20.302	5.477225575
30	1.6	0	27.99042125	0.681	1580.498	56.466	5.676	63.915	2.283	0.49	65.3	2.333	0.653	129.04	4.61	1.925	505.472	18.059	5.477225575
30	1.8	0	29.22545024	0.669	1526.977	52.248	5.311	61.431	2.102	0.485	63.279	2.165	0.644	126.557	4.33	1.936	508.07	17.384	5.477225575
30	2	0	31.59937659	0.676	1552.789	49.14	5.705	61.147	1.935	0.555	62.301	1.972	0.622	121.884	3.857	1.835	494.39	15.646	5.477225575
30	2.2	0	32.90206973	0.67	1552.595	47.188	6.35	59.105	1.796	0.518	60.953	1.853	0.629	120.997	3.677	1.857	496.044	15.076	5.477225575
30	2.4	0	34.90715644	0.654	1451.868	41.592	6.692	57.682	1.652	0.527	59.53	1.705	0.647	122.346	3.505	1.865	492.774	14.117	5.477225575
30	2.6	0	36.70719558	0.696	1672.921	45.575	6.415	57.033	1.554	0.596	57.264	1.56	0.662	121.003	3.296	1.892	493.509	13.444	5.477225575
\end{filecontents}

\begin{filecontents}{datag.dat}
X_SIZE	Y_SIZE	N	TIME	NUM_STEP	OPT	LB
4	4	16	0.02	95	19.79166667	4.8
5	4	20	0.03	140	22.95081967	6.1
5	5	25	0.04	166	25.15151515	6.6
6	5	30	0.06	186	24.8	7.5
6	6	36	0.08	219	26.70731707	8.2
7	6	42	0.1	239	23.9	10
7	7	49	0.13	264	25.63106796	10.3
8	7	56	0.17	301	27.87037037	10.8
8	8	64	0.23	343	28.34710744	12.1
9	8	72	0.3	400	31.25	12.8
9	9	81	0.36	410	30.37037037	13.5
10	9	90	0.4	417	29.36619718	14.2
10	10	100	0.54	475	30.64516129	15.5
11	10	110	0.67	528	32.59259259	16.2
11	11	121	0.82	557	34.38271605	16.2
12	11	132	1.08	647	35.54945055	18.2
12	12	144	1.15	627	33.70967742	18.6
13	12	156	1.42	660	33.16582915	19.9
13	13	169	1.68	748	34.62962963	21.6
14	13	182	2.01	746	33.30357143	22.4
14	14	196	2.32	768	32.96137339	23.3
15	14	210	2.64	777	32.78481013	23.7
15	15	225	3.15	827	34.60251046	23.9
16	15	240	3.81	839	32.7734375	25.6
16	16	256	4.67	933	36.1627907	25.8
17	16	272	5.82	1021	38.38345865	26.6
17	17	289	6.24	1016	35.77464789	28.4
18	17	306	7.64	1055	36.37931034	29
18	18	324	9.3	1113	37.85714286	29.4
19	18	342	11.3	1170	38.11074919	30.7
19	19	361	11.98	1179	37.90996785	31.1
20	19	380	14.98	1235	38.35403727	32.2
20	20	400	16.07	1311	39.84802432	32.9
21	20	420	20.23	1277	36.17563739	35.3
21	21	441	21.76	1364	37.99442897	35.9
22	21	462	27.39	1394	37.37265416	37.3
22	22	484	29.6	1443	38.17460317	37.8
23	22	506	33.58	1475	39.12466844	37.7
23	23	529	36.29	1452	37.61658031	38.6
24	23	552	45.47	1584	40.10126582	39.5
24	24	576	46.49	1551	38.20197044	40.6
25	24	600	58.88	1599	37.80141844	42.3
25	25	625	64.3	1640	39.14081146	41.9
26	25	650	83.36	1752	40.74418605	43
26	26	676	85.78	1771	39.79775281	44.5
27	26	702	101.34	1771	39.26829268	45.1
27	27	729	110.62	1834	38.28810021	47.9
28	27	756	152.39	1871	39.38947368	47.5
28	28	784	157.76	1943	39.73415133	48.9
29	28	812	164.25	1932	38.64	50
29	29	841	180.83	1998	39.48616601	50.6
30	29	870	203.23	2085	40.80234834	51.1
30	30	900	226.41	2117	39.57009346	53.5
31	30	930	332.63	2201	41.2945591	53.3
31	31	961	297.89	2109	38.34545455	55
32	31	992	369.05	2156	39.48717949	54.6
32	32	1024	378.92	2247	39.7699115	56.5
33	32	1056	509.4	2357	40.29059829	58.5
33	33	1089	467.03	2303	39.57044674	58.2
34	33	1122	619.2	2482	42.5728988	58.3
\end{filecontents}

\begin{filecontents}{datah.dat}
X_SIZE	Y_SIZE	TIME	NUM_STEP	OPT	RATIO	N
5	4	0.08	415	5.9	70.33898305	20
5	6	0.18	691	7.7	89.74025974	30
7	6	0.36	1065	9.2	115.7608696	42
7	8	0.67	1516	11	137.8181818	56
9	8	1.03	1797	12.4	144.9193548	72
9	10	1.74	2452	14.1	173.9007092	90
11	10	2.61	2902	15.7	184.8407643	110
11	12	3.86	3421	17.6	194.375	132
13	12	5.01	3735	19.9	187.6884422	156
13	14	7.99	4845	21.6	224.3055556	182
15	14	10.23	5135	23.2	221.3362069	210
15	16	13.49	5524	25.4	217.480315	240
17	16	17.15	5804	26.1	222.3754789	272
17	18	23.56	6811	29.7	229.3265993	306
19	18	29.04	7019	30.7	228.6319218	342
19	20	39.6	7920	31.6	250.6329114	380
21	20	51.88	8571	34	252.0882353	420
21	22	73.93	9379	35.8	261.9832402	462
23	22	79.86	8975	37.6	238.6968085	506
23	24	107.48	10367	39.5	262.4556962	552
25	24	143.17	10518	41.8	251.6267943	600
25	26	189.3	11984	44.7	268.098434	650
27	26	221.9	11921	46.1	258.5900217	702
27	28	278.84	12833	46.6	275.3862661	756
29	28	382.25	13455	49.3	272.9208925	812
29	30	465.76	13821	51	271	870
31	30	597.11	14382	53.2	270.3383459	930
31	32	727.36	15216	54.4	279.7058824	992
33	32	918.75	15657	56.4	277.606383	1056
33	34	1053.48	17506	59.2	295.7094595	1122
35	34	1322.49	17327	61.7	280.8265802	1190
\end{filecontents}

\begin{filecontents}{datan2.dat}
N	EPSILON	D	LB	SAG_TIME	SAG_MAKESPAN	SAG_OPT	ML
20	0	0	11.6360038	0.404	346.713	29.797	4.472135955
100	0	0	29.08763762	1.316	1150.763	39.562	10
200	0	0	42.37382295	4.201	1890.886	44.624	14.14213562
300	0	0	52.13473747	10.038	2515.888	48.257	17.32050808
400	0	0	60.35483214	20.097	2938.884	48.693	20
500	0	0	67.34538045	37.94	3302.625	49.04	22.36067977
600	0	0	74.82252559	58.318	3663.073	48.957	24.49489743
700	0	0	81.05314807	101.091	4082.798	50.372	26.45751311
800	0	0	86.87571902	155.546	4413.586	50.803	28.28427125
900	0	0	92.40051465	255.897	4679.212	50.641	30
1000	0	0	97.08712598	372.137	5099.057	52.52	31.6227766
\end{filecontents}

\begin{filecontents}{datad.dat}
N	EPSILON	D	LB	SAG_TIME	SAG_MAKESPAN	SAG_OPT	ILP_TIME	ILP_MAKESPAN	ILP_OPT	ILP_TIME2	ILP_MAKESPAN2	ILP_OPT2	ILP_TIME4	ILP_MAKESPAN4	ILP_OPT4	ILP_TIME8	ILP_MAKESPAN8	ILP_OPT8
30	0	0	14.51527132	0.681	1625.631	111.995	5.239	72.79	5.015	0.491	75.561	5.206	0.643	141.148	9.724	1.947	523.816	36.087
30	0	10	22.63709328	0.69	1688.023	74.569	5.821	84.606	3.737	0.518	87.839	3.88	0.647	149.5	6.604	1.973	532.168	23.509
30	0	20	32.11125934	0.655	1482.593	46.17	6.72	94.412	2.94	0.46	94.874	2.955	0.622	157.458	4.904	1.879	534.815	16.655
30	0	30	42.75052578	0.673	1610.278	37.667	5.437	103.855	2.429	0.464	105.241	2.462	0.641	168.981	3.953	1.914	549.108	12.844
30	0	40	51.82103161	0.683	1663.79	32.106	6.73	112.565	2.172	0.462	114.875	2.217	0.681	185.542	3.58	1.994	569.365	10.987
30	0	50	61.81505708	0.679	1687.61	27.301	5.723	124.376	2.012	0.48	125.762	2.034	0.631	191.58	3.099	1.926	571.245	9.241
30	0	60	71.73189724	0.652	1518.498	21.169	5.947	135.166	1.884	0.512	137.938	1.923	0.633	198.906	2.773	1.928	574.414	8.008
30	0	70	82.11069366	0.679	1670.594	20.346	5.971	144.772	1.763	0.506	147.082	1.791	0.655	211.283	2.573	1.879	588.177	7.163
30	0	80	92.83808895	0.691	1762.29	18.982	5.436	154.716	1.667	0.51	156.333	1.684	0.65	220.996	2.38	1.948	601.585	6.48
30	0	90	102.6311972	0.669	1633.94	15.92	5.769	166.084	1.618	0.527	168.163	1.639	0.623	228.9	2.23	1.956	609.951	5.943
30	0	100	111.5305618	0.664	1623.715	14.558	6.259	173.411	1.555	0.489	176.414	1.582	0.63	236.458	2.12	1.931	616.124	5.524
\end{filecontents}
\section{Evaluation}\label{sec:evaluation}

\pgfplotsset{every axis/.append style={
    font=\scriptsize,
    /pgf/number format/1000 sep={},
    % Adjust height and width for every plot
    height = 0.46 \linewidth,
    width = 0.52 \linewidth,
    % Make x axis consistent to data
    % xtick=data,
    % Methods to change x, y label location
    x label style={at={(axis description cs:0.5,-0.1)}},
    y label style={at={(axis description cs:-0.05,0.5)}},
    % Add a grid
    ymajorgrids=true,
    grid style=dashed
}}
\tikzset{mark size=1.25}

\pgfplotsset{select coords between index/.style 2 args={
    x filter/.code={
        \ifnum\coordindex<#1\def\pgfmathresult{}\fi
        \ifnum\coordindex>#2\def\pgfmathresult{}\fi
    }
}}

\makeatletter
\def\pgfplots@getautoplotspec into#1{%
    \begingroup
    \let#1=\pgfutil@empty
    \pgfkeysgetvalue{/pgfplots/cycle multi list/@dim}\pgfplots@cycle@dim
    \let\pgfplots@listindex=\pgfplots@numplots
    %%% Start new code
    \pgfkeysgetvalue{/pgfplots/cycle list set}\pgfplots@listindex@set
    \ifx\pgfplots@listindex@set\pgfutil@empty
    \else 
        \c@pgf@counta=\pgfplots@listindex
        \c@pgf@countb=\pgfplots@listindex@set
        \advance\c@pgf@countb by -\c@pgf@counta
        \globaldefs=1\relax
        \edef\setshift{%
            \noexpand\pgfkeys{
                /pgfplots/cycle list shift=\the\c@pgf@countb,
                /pgfplots/cycle list set=
            }
        }%
        \setshift%
    \fi
    %%% End new code    
    \pgfkeysgetvalue{/pgfplots/cycle list shift}\pgfplots@listindex@shift
    \ifx\pgfplots@listindex@shift\pgfutil@empty
    \else
        \c@pgf@counta=\pgfplots@listindex\relax
        \advance\c@pgf@counta by\pgfplots@listindex@shift\relax
        \ifnum\c@pgf@counta<0
            \c@pgf@counta=-\c@pgf@counta
        \fi
        \edef\pgfplots@listindex{\the\c@pgf@counta}%
    \fi
    \ifnum\pgfplots@cycle@dim>0
        % use the 'cycle multi list' feature.
        %
        % it employs a scalar -> multiindex map like
        % void fromScalar( size_t d, size_t scalar, size_t* Iout, const size_t* N )
        % {
        %   size_t ret=scalar;
        %   for( int i = d-1; i>=0; --i ) {
        %       Iout[i] = ret % N[i];
        %       ret /= N[i];
        %   }
        % }
        % to get the different indices into the cycle lists.
        %-------------------------------------------------- 
        \c@pgf@counta=\pgfplots@cycle@dim\relax
        \c@pgf@countb=\pgfplots@listindex\relax
        \advance\c@pgf@counta by-1
        \pgfplotsloop{%
            \ifnum\c@pgf@counta<0
                \pgfplotsloopcontinuefalse
            \else
                \pgfplotsloopcontinuetrue
            \fi
        }{%
            \pgfkeysgetvalue{/pgfplots/cycle multi list/@N\the\c@pgf@counta}\pgfplots@cycle@N
            % compute list index:
            \pgfplotsmathmodint{\c@pgf@countb}{\pgfplots@cycle@N}%
            \divide\c@pgf@countb by \pgfplots@cycle@N\relax
            \expandafter\pgfplots@getautoplotspec@
                \csname pgfp@cyclist@/pgfplots/cycle multi list/@list\the\c@pgf@counta @\endcsname
                {\pgfplots@cycle@N}%
                {\pgfmathresult}%
            \t@pgfplots@toka=\expandafter{#1,}%
            \t@pgfplots@tokb=\expandafter{\pgfplotsretval}%
            \edef#1{\the\t@pgfplots@toka\the\t@pgfplots@tokb}%
            \advance\c@pgf@counta by-1
        }%
    \else
        % normal cycle list:
        \pgfplotslistsize\autoplotspeclist\to\c@pgf@countd
        \pgfplots@getautoplotspec@{\autoplotspeclist}{\c@pgf@countd}{\pgfplots@listindex}%
        \let#1=\pgfplotsretval
    \fi
    \pgfmath@smuggleone#1%
    \endgroup
}

\pgfplotsset{
    cycle list set/.initial=
}
\makeatother

% argument #1: any options
\newenvironment{customlegend}[1][]{%
    \begingroup
    % inits/clears the lists (which might be populated from previous
    % axes):
    \csname pgfplots@init@cleared@structures\endcsname
    \pgfplotsset{#1}%
}{%
    % draws the legend:
    \csname pgfplots@createlegend\endcsname
    \endgroup
}%

% makes \addlegendimage available (typically only available within an
% axis environment):
\def\addlegendimage{\csname pgfplots@addlegendimage\endcsname}

\pgfplotstableread{datan.dat}{\datan}
\pgfplotstableread{datae.dat}{\datae}
\pgfplotstableread{datad.dat}{\datad}
\pgfplotstableread{datan2.dat}{\datans}

\setlength\tabcolsep{1.0pt} % default value: 6pt
\renewcommand{\arraystretch}{0.3} % General space between rows (1 standard)

In this section, we provide evaluations of computational time 
and solution optimality of \tsm-based algorithms when $k = 2$. A hardware experiment is also
carried out to verify the support for differential constraints. In our implementation,
beside the baseline \esrc with \sag, we also implemented an \esrc algorithm 
with the high quality (but non-polynomial-time) ILP solver from \cite{YuLav16TRO}.
We denote \tsm\ with \sag\ as \tsmsag, 
and \tsm\ with ILP solver as \tsmilp.
\tsmilp-$m$ indicates the usage of $m$-way split heuristics \cite{YuLav16TRO}.
We use makespan as the key metric which 
also bounds the total distance and maximum distance.
Additionally, instead of simply using the makespan value, 
which is less informative, we divide it by an underestimated 
minimum makespan $\max_{1 \leq i \leq n} \|x_i^I - x_i^G\|$.

\tsmsag\ is implemented in Python. 
The \tsmilp\ implementation is partially based on the Java code from~\cite{YuLav16TRO},
which utilizes the Gurobi ILP solver~\cite{gurobi}.
Since the robot radius $r$ is relative, we fix $r = 1$. 
To capture denseness of the setup, we introduce $2r + \delta$ as 
the minimum distance between two robots allowed when generating $X^I$ and $X^G$.
The region where the robots may appear is limited
to a circle of radius $1.5$ times the radius of the smallest circle 
that $n$ discs with radius $r + \delta / 2$ could fit in~\cite{specht2016the}.
The edge length of hexagonal grids is $\ell = 4/\sqrt{3}$.
The scaling factor $\lambda = \ell / (\delta / 2 + r)$. 
Our evaluation mainly focuses on $d = 0$ since the shift step 
is straightforward.
An example instance is provided in Fig.~\ref{fig:evaluation_setup}.

\begin{figure}[t]
    \vspace*{2mm}
    \centering
    \begin{tabular}{ccc}
        \includegraphics[width = 0.4 \columnwidth, trim={795 305 670 380}, clip]{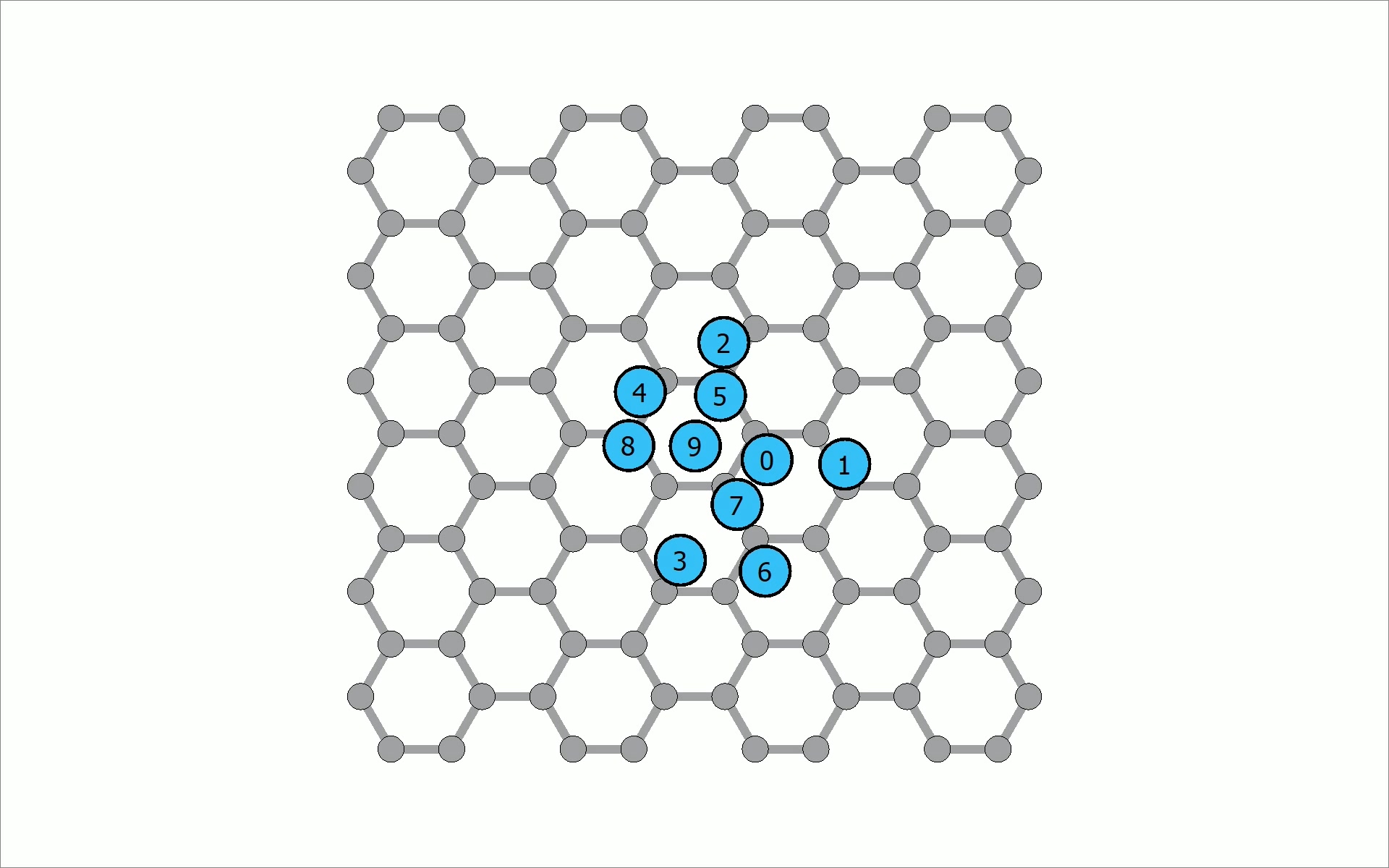}
        &&
        \includegraphics[width = 0.4 \columnwidth, trim={795 315 670 370}, clip]{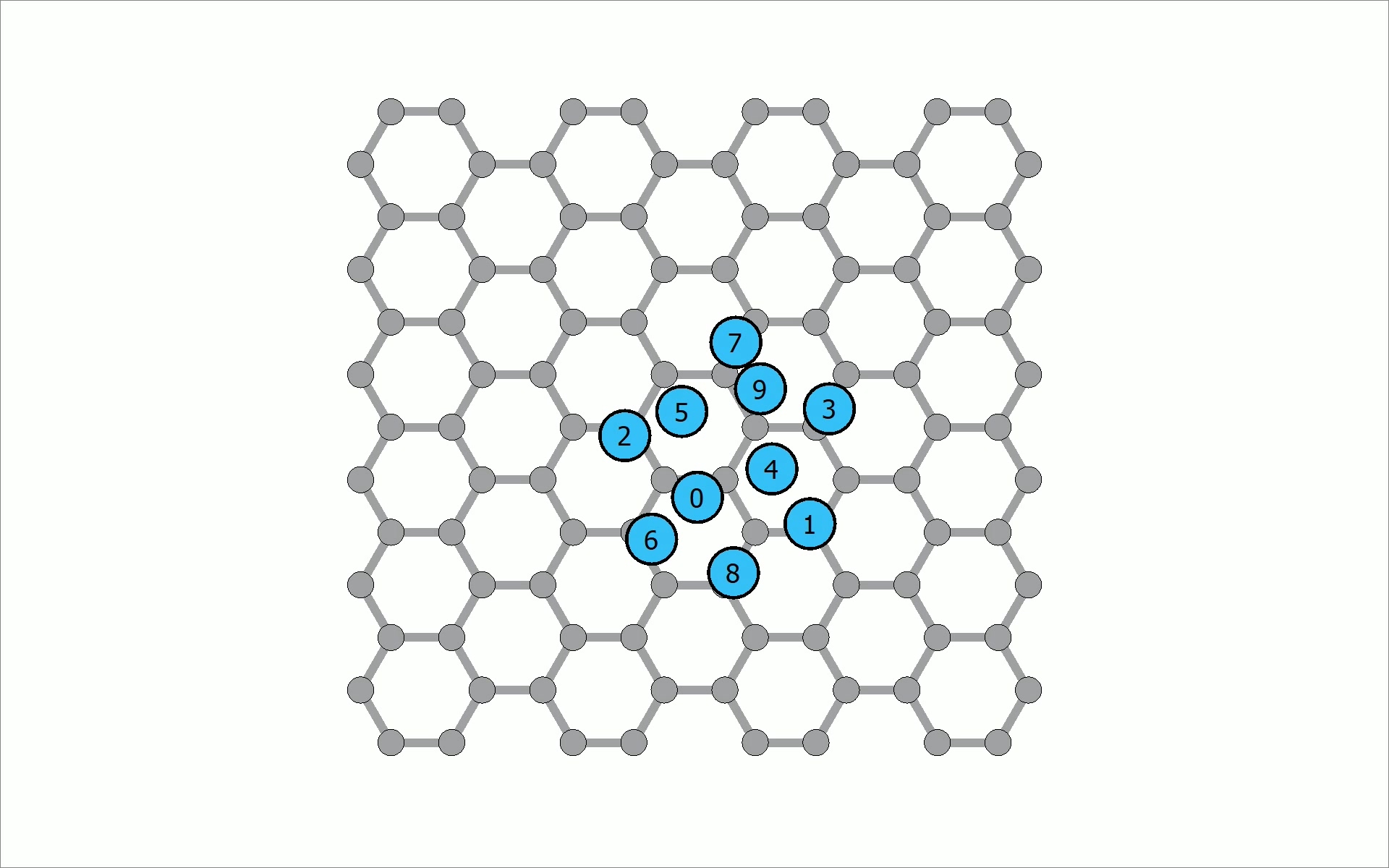}
    \end{tabular}
    \caption{An example of a randomly generated \cmpp\ with 
    $n = 10, \delta = 0, d = 0$. 
    The left and right figure indicates $X^I$ and $X^G$, respectively. 
    Part of the hexagonal graph generated by \esrc\ is rendered in the background.}
    \label{fig:evaluation_setup}
\end{figure}

Problem instances are generated by assiging arbitrary labeling 
to uniformly randomly sampled robot configurations in continuous space. 
For each set of parameters $(n, \delta, d)$, 
$10$ instances are attempted and the average is taken. 
A data point is only recorded if all 10 instances are completed successfully. 
All experiments were executed on a 
Intel\textsuperscript{\textregistered} Core\textsuperscript{TM} i7-6900K CPU 
with 32GB RAM at 2133MHz.

\subsection{\tsmsag}\label{sec:evaluate_sag}

To observe the basic behavior of \tsmsag, we fix $\delta = d = 0$ and evaluate 
\tsmsag\ on both hexagonal grids and square grids by increasing $n$. 
For square grids, $l = 4/\sqrt{2}, \lambda = 2 / (\delta / 2 + r)$.
The result is in Fig.~\ref{fig:evaluate_sag}. 
From the computational time aspect (see Fig.~\ref{fig:evaluate_sag}~(a)),
\tsmsag\ is able to solve problems with $1000$ robots within $800$ seconds.
On the optimality side (see Fig.~\ref{fig:evaluate_sag}~(b)), 
because of the simple implementation and the dense setting,
the optimality ratio of solutions generated by \tsmsag\ is relatively large.
For example, 
the makespan for hexagonal grid is $75$ times 
the underestimated makespan when $n = 20$. 
We can observe, however, that the ratio between the solution makespan and 
the underestimated makespan flattens out as $n$ grows, confirming
the constant factor optimality property of \esrc. 

We note that the guaranteed optimality of \esrc is in fact sub-linear with
respect to the number of robots, which is clearly illustrated in 
Fig.~\ref{fig:evaluate_sag_linear}. As we can observe, the makespan is 
proportional to the number of vertices on the longer side of the grid, 
which in turn is proportional to $\sqrt{n}$. 

\begin{figure}[htp]
    \centering
    \begin{tabular}{ccc}
    \begin{tikzpicture}[scale=1]
        \begin{axis}[
            xtick= {0, 500, 1000},
            ylabel style = {at={(axis description cs:-0.17,.5)}, align=center},
            xlabel={Number of Robots ($n$)},
            ylabel={Computational Time (sec)},
            xmin=0, xmax=1000, ymin=0, ymax=800
        ]
        \addplot[black,every mark/.append style={fill=black},mark=square*] table [x={N}, y={SAG_TIME}] {\datans};
        \addplot[blue,every mark/.append style={fill=blue},mark=diamond*] table [x={N}, y={SAG_TIME}] {\datan};

        \end{axis}
	\end{tikzpicture}
        &&
    \begin{tikzpicture}[scale=1]
        \begin{axis}[
            xtick= {0, 500, 1000},
            ylabel style = {at={(axis description cs:-0.17,.5)}, align=center},
            xlabel={Number of Robots ($n$)},
            ylabel={Optimality Ratio},
            xmin=0, xmax=1000, ymin=0, ymax=400
        ]
        \addplot[black,every mark/.append style={fill=black},mark=square*] table [x={N}, y={SAG_OPT}] {\datans};
        \addplot[blue,every mark/.append style={fill=blue},mark=diamond*] table [x={N}, y={SAG_OPT}] {\datan};
        \end{axis}
	\end{tikzpicture}
        \\
    {\footnotesize (a)} && {\footnotesize (b)} \\
    \multicolumn{3}{c}{
        \begin{tikzpicture}
            \begin{customlegend}[
                legend columns=2,
                legend style={draw=none},
                legend entries={
                    {\scriptsize Square Grid},
                    {\scriptsize Hexagonal Grid}
                    }]
                \addlegendimage{black,every mark/.append style={fill=black},mark=square*}
                \addlegendimage{blue,every mark/.append style={fill=blue},mark=diamond*}
            \end{customlegend}
        \end{tikzpicture}
    }
    \end{tabular}
    \caption{\label{fig:evaluate_sag} 
        Computational time and optimality ratio of \tsmsag\ versus the number of robots.
    }
\end{figure}
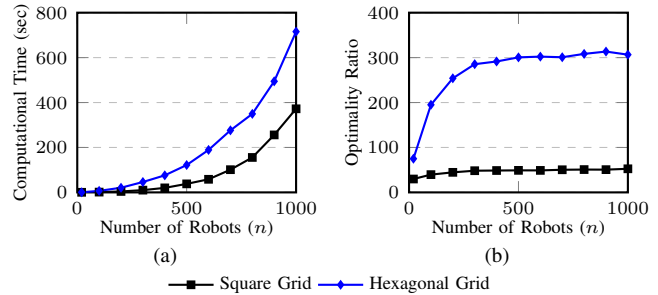

\begin{figure}[htp]
    \centering
    \begin{tikzpicture}[scale=1]
        \begin{axis}[
            legend pos=north west,
            width = 0.46 * 1.61 \linewidth,
            ylabel style = {at={(axis description cs:-0.05,.5)}, align=center},
            xlabel={Number of vertices on the longer side of $G$ ($~\sqrt{n}$)},
            ylabel={Makespan},
            xmin=0, xmax=32, ymin=0, ymax=30000
        ]
        \addplot[black,every mark/.append style={fill=black},mark=square*] table [x={ML}, y={SAG_MAKESPAN}] {\datans};
        \addlegendentry{{\scriptsize Square Grid}}
        \addplot[blue,every mark/.append style={fill=blue},mark=diamond*] table [x={ML}, y={SAG_MAKESPAN}] {\datan};
        \addlegendentry{{\scriptsize Hexagonal Grid}}
        \end{axis}
    \end{tikzpicture}
    \caption{\label{fig:evaluate_sag_linear}
        Makespan of \tsmsag\ versus the number of vertices on the longer side of the grid.
    }
\end{figure}
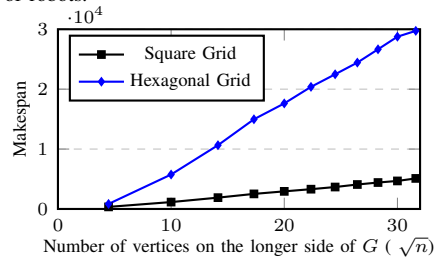

\subsection{\tsmsag\ vs. \tsmilp}

We then compare \tsmsag\ and \tsmilp on hexagonal grids. 
First we let $\delta = d = 0$ and gradually increase $n$.
The performance of the algorithms 
agrees with the expectation (see Fig.~\ref{fig:vary_n}).
\tsmilp\ guarantees the optimal solution on the \mrpp\ sub-problem. 
However, its running time grows quickly and becomes unstable.
It cannot always solve a problem instance in $30$ minutes when $n > 50$. 
Nevertheless, the split heuristic makes \tsmilp\ more applicable when $n$ is large. 
For example, the running time of \tsmilp\ with 4-way split 
is about the same as \tsmsag\ when $n = 700$.
Although \tsmilp\ with split heuristic 
can solve problems with hundreds of robots, 
when $n \geq 800$, only \tsmsag\ could finish
due to its polynomial time complexity.

The makespan of solutions generated by \tsmilp\ is quite low and practical,
since it always provides the optimal solution to \mrpp\ sub-problems.
For instance, when $n = 40$, the makespan of generated solutions 
is $4.93$ times the estimated value. 
The optimality of \tsmilp\ with split heuristics are also close to being optimal.
When $n = 500$, \tsmilp\ with 4-way split provides solutions 
with makespan $4.73$ times the underestimated value.

\begin{figure}[htp]
    \centering
    \begin{tabular}{ccc}
    \begin{tikzpicture}[scale=1]
        \begin{axis}[
            ylabel style = {at={(axis description cs:-0.2,.5)}, align=center},
            xlabel={Number of Robots ($n$)},
            ylabel={Computational Time (sec)},
            xmin=0, xmax=1000, ymin=0, ymax=800
        ]
        \addplot[black,every mark/.append style={fill=black},mark=square*] table [x={N}, y={SAG_TIME}] {\datan};
        \addplot[red,every mark/.append style={fill=red},mark=square*] table [x={N}, y={ILP_TIME}] {\datan};
        \addplot[green,every mark/.append style={fill=green},mark=otimes*] table [x={N}, y={ILP_TIME2}] {\datan};
        \addplot[blue,every mark/.append style={fill=blue},mark=star] table [x={N}, y={ILP_TIME4}] {\datan};
        % \addplot[magenta,every mark/.append style={fill=magenta},mark=diamond*] table [x={N}, y={ILP_TIME8}] {\datan};
        \end{axis}
	\end{tikzpicture}
        &&
    \begin{tikzpicture}[scale=1]
        \begin{semilogyaxis}[
            ylabel style = {at={(axis description cs:-0.15,.5)}, align=center},
            xlabel={Number of Robots ($n$)},
            ylabel={Optimality Ratio},
            xmin=0, xmax=1000, ymax=400
        ]
        \addplot[black,every mark/.append style={fill=black},mark=square*] table [x={N}, y={SAG_OPT}] {\datan};
        \addplot[red,every mark/.append style={fill=red},mark=square*] table [x={N}, y={ILP_OPT}] {\datan};
        \addplot[green,every mark/.append style={fill=green},mark=otimes*] table [x={N}, y={ILP_OPT2}] {\datan};
        \addplot[blue,every mark/.append style={fill=blue},mark=star] table [x={N}, y={ILP_OPT4}] {\datan};
        % \addplot[magenta,every mark/.append style={fill=magenta},mark=diamond*] table [x={N}, y={ILP_OPT8}] {\datan};
        \end{semilogyaxis}
	\end{tikzpicture}
        \\
    {\footnotesize (a)} && {\footnotesize (b)} \\
    \multicolumn{3}{c}{
        \begin{tikzpicture}
            \begin{customlegend}[
                legend columns=4,
                legend style={draw=none},
                legend entries={
                    {\scriptsize \tsmsag},
                    {\scriptsize \tsmilp},
                    {\scriptsize \tsmilp-2},
                    {\scriptsize \tsmilp-4}     
                    }]
                \addlegendimage{black,every mark/.append style={fill=black},mark=square*}
                \addlegendimage{red,every mark/.append style={fill=red},mark=square*}
                \addlegendimage{green,every mark/.append style={fill=green},mark=otimes*}
                \addlegendimage{blue,every mark/.append style={fill=blue},mark=star}
            \end{customlegend}
        \end{tikzpicture}
    }
    \end{tabular}
    \caption{\label{fig:vary_n} Performance of \esrc algorithms as $n$ varies.}
\end{figure}
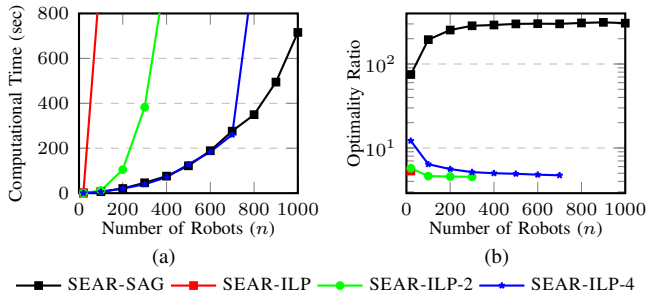

To model different density of robots,
we then fix $n = 30, d = 0$ and increase $\delta$.
With a larger $\delta$, $\lambda$ could be smaller,
which makes robots move less during linear scaling.
Fig.~\ref{fig:vary_e} demonstrates the performance of the methods 
as $\delta$ goes from $0$ to $2.6$ 
(when $\delta \geq 8/\sqrt{3} - 2 \approx 2.62$, no expansion is required).
Here, since $n$ and $|V|$ are not changed,
the size of the \mrpp\ sub-problems remains the same. 
Thus the computational time of the algorithms stays static.
On the other hand, the optimality ratio becomes lower as $\delta$ grows, 
due to the increasement of the underestimated makespan.

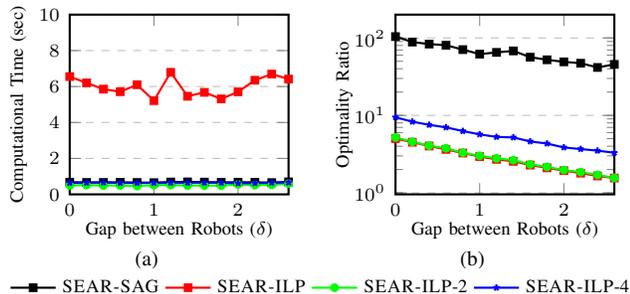
\begin{figure}[htp]
    \centering
    \begin{tabular}{ccc}
    \begin{tikzpicture}[scale=1]
        \begin{axis}[
            ylabel style = {at={(axis description cs:-0.15,.5)}, align=center},
            xlabel={Gap between Robots ($\delta$)},
            ylabel={Computational Time (sec)},
            xmin=0, xmax=2.6, ymin=0, ymax=10
        ]
        \addplot[black,every mark/.append style={fill=black},mark=square*] table [x={EPSILON}, y={SAG_TIME}] {\datae};
        \addplot[red,every mark/.append style={fill=red},mark=square*] table [x={EPSILON}, y={ILP_TIME}] {\datae};
        \addplot[green,every mark/.append style={fill=green},mark=otimes*] table [x={EPSILON}, y={ILP_TIME2}] {\datae};
        \addplot[blue,every mark/.append style={fill=blue},mark=star] table [x={EPSILON}, y={ILP_TIME4}] {\datae};
        % \addplot[magenta,every mark/.append style={fill=magenta},mark=diamond*] table [x={EPSILON}, y={ILP_TIME8}] {\datae};
        \end{axis}
	\end{tikzpicture}
        &&
    \begin{tikzpicture}[scale=1]
        \begin{semilogyaxis}[
            ylabel style = {at={(axis description cs:-0.15,.5)}, align=center},
            xlabel={Gap between Robots ($\delta$)},
            ylabel={Optimality Ratio},
            xmin=0, xmax=2.6, ymax=200
        ]
        \addplot[black,every mark/.append style={fill=black},mark=square*] table [x={EPSILON}, y={SAG_OPT}] {\datae};
        \addplot[red,every mark/.append style={fill=red},mark=square*] table [x={EPSILON}, y={ILP_OPT}] {\datae};
        \addplot[green,every mark/.append style={fill=green},mark=otimes*] table [x={EPSILON}, y={ILP_OPT2}] {\datae};
        \addplot[blue,every mark/.append style={fill=blue},mark=star] table [x={EPSILON}, y={ILP_OPT4}] {\datae};
        % \addplot[magenta,every mark/.append style={fill=magenta},mark=diamond*] table [x={EPSILON}, y={ILP_OPT8}] {\datae};
        \end{semilogyaxis}
	\end{tikzpicture}
        \\
    {\footnotesize (a)} && {\footnotesize (b)} \\
    \multicolumn{3}{c}{
        \begin{tikzpicture}
            \begin{customlegend}[
                legend columns=4,
                legend style={draw=none},
                legend entries={
                    {\scriptsize \tsmsag},
                    {\scriptsize \tsmilp},
                    {\scriptsize \tsmilp-2},
                    {\scriptsize \tsmilp-4}     
                    }]
                \addlegendimage{black,every mark/.append style={fill=black},mark=square*}
                \addlegendimage{red,every mark/.append style={fill=red},mark=square*}
                \addlegendimage{green,every mark/.append style={fill=green},mark=otimes*}
                \addlegendimage{blue,every mark/.append style={fill=blue},mark=star}
            \end{customlegend}
        \end{tikzpicture}
    }
    \end{tabular}
    \caption{\label{fig:vary_e} Performance of \esrc algorithms as $\delta$ varies.}
\end{figure}

For the last part of our experiment, 
we fix $n = 30, \delta = 0$ and change $d$. 
As we can observe in Fig.\ref{fig:vary_d}, 
the solution becomes closer to optimal as $d$ increases, as expected. 

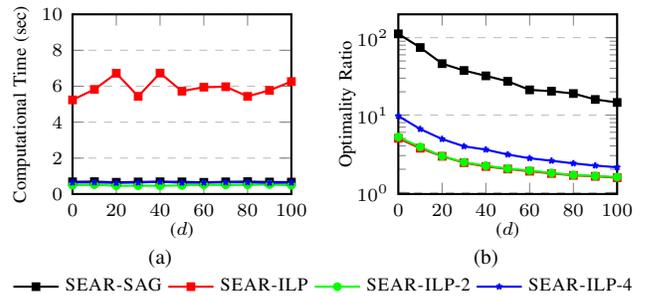
\begin{figure}[htp]
    \centering
    \begin{tabular}{ccc}
    \begin{tikzpicture}[scale=1]
        \begin{axis}[
            ylabel style = {at={(axis description cs:-0.15,.5)}, align=center},
            xlabel={ ($d$)},
            ylabel={Computational Time (sec)},
            xmin=0, xmax=100, ymin=0, ymax=10
        ]
        \addplot[black,every mark/.append style={fill=black},mark=square*] table [x={D}, y={SAG_TIME}] {\datad};
        \addplot[red,every mark/.append style={fill=red},mark=square*] table [x={D}, y={ILP_TIME}] {\datad};
        \addplot[green,every mark/.append style={fill=green},mark=otimes*] table [x={D}, y={ILP_TIME2}] {\datad};
        \addplot[blue,every mark/.append style={fill=blue},mark=star] table [x={D}, y={ILP_TIME4}] {\datad};
        % \addplot[magenta,every mark/.append style={fill=magenta},mark=diamond*] table [x={D}, y={ILP_TIME8}] {\datad};
        \end{axis}
	\end{tikzpicture}
        &&
    \begin{tikzpicture}[scale=1]
        \begin{semilogyaxis}[
            ylabel style = {at={(axis description cs:-0.15,.5)}, align=center},
            xlabel={ ($d$)},
            ylabel={Optimality Ratio},
            xmin=0, xmax=100, ymax=200
        ]
        \addplot[black,every mark/.append style={fill=black},mark=square*] table [x={D}, y={SAG_OPT}] {\datad};
        \addplot[red,every mark/.append style={fill=red},mark=square*] table [x={D}, y={ILP_OPT}] {\datad};
        \addplot[green,every mark/.append style={fill=green},mark=otimes*] table [x={D}, y={ILP_OPT2}] {\datad};
        \addplot[blue,every mark/.append style={fill=blue},mark=star] table [x={D}, y={ILP_OPT4}] {\datad};
        % \addplot[magenta,every mark/.append style={fill=magenta},mark=diamond*] table [x={D}, y={ILP_OPT8}] {\datad};
        \end{semilogyaxis}
	\end{tikzpicture}
        \\
    {\footnotesize (a)} && {\footnotesize (b)} \\
    \multicolumn{3}{c}{
        \begin{tikzpicture}
            \begin{customlegend}[
                legend columns=4,
                legend style={draw=none},
                legend entries={
                    {\scriptsize \tsmsag},
                    {\scriptsize \tsmilp},
                    {\scriptsize \tsmilp-2},
                    {\scriptsize \tsmilp-4}
                    }]
                \addlegendimage{black,every mark/.append style={fill=black},mark=square*}
                \addlegendimage{red,every mark/.append style={fill=red},mark=square*}
                \addlegendimage{green,every mark/.append style={fill=green},mark=otimes*}
                \addlegendimage{blue,every mark/.append style={fill=blue},mark=star}
            \end{customlegend}
        \end{tikzpicture}
    }
    \end{tabular}
    \caption{\label{fig:vary_d} Performance of \esrc algorithms as $d$ varies.}
\end{figure}

\begin{figure}[htp]
    \centering
    \vspace*{2mm}
    \begin{tabularx}{1.05\linewidth}{@{}>{\centering\arraybackslash}X>{\centering\arraybackslash}X@{}}
        \includegraphics[keepaspectratio, height= 35mm]{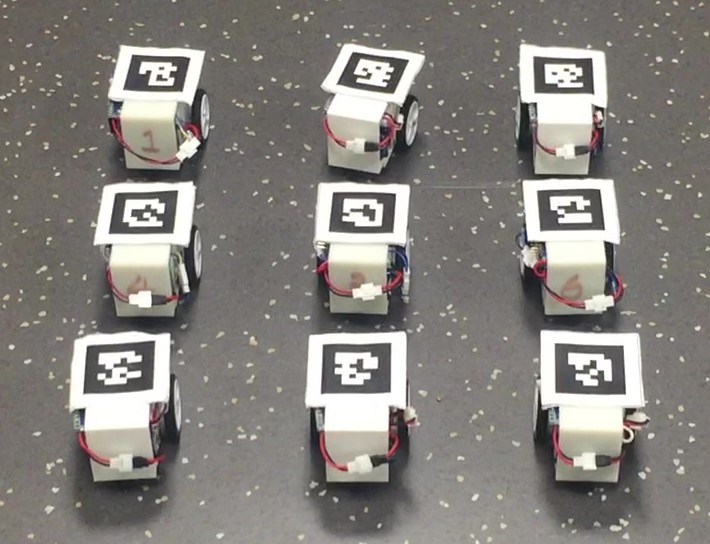}
        &
        \includegraphics[keepaspectratio, height= 35mm]{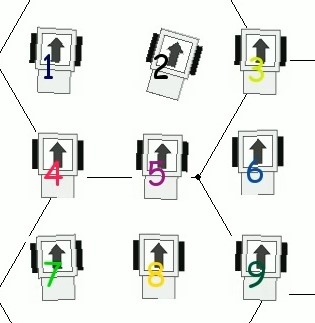}\\
        {\footnotesize (a)} & {\footnotesize (b)} \\
    \end{tabularx}
    \caption{The hardware experimentation. 
    (a) The paths generated by \esrc\ are executed by differential drive cars. 
    (b) Part of a Python-based UI, which is used to monitor and control the vehicles.}
    \label{fig:hardware_setup}
\end{figure}

Both software simulation and hardware experimentation 
of the proposed algorithms can be found in the video attachment. 
The hardware implementation (details in Fig.~\ref{fig:hardware_setup}) 
is based on the microMVP platform~\cite{YuHanTanRus17ICRA}.

\section{Conclusion}\label{sec:conclusion}

In this paper, we proposed the \esrc solution pipeline for tackling 
continuous multi-robot path planning (\cmpp) problem in the absence of obstacles. 
We show that \esrc-based algorithms provide strong theoretical guarantees. With 
an alternative sub-routine for the discrete \mrpp instance, \esrc also proves to
be practical, as confirmed by simulation and hardware experiment. 
%This paper defines a rather challenging \mrpp\ problem: \cmpp, 
%gives a lower bound on solution optimality, 
%and introduces an algorithm to solve such a problem. 
%The algorithm is proven to be efficient both theoretically and experimentally.

The current iteration of the \esrc-based algorithms has focused on 
theoretical guarantees; many additional improvements are possible. 
For example, the expand and assign steps can be done in an adaptive 
manner that avoids uniform scaling, which should greatly boost 
the optimality of the resulting algorithm. Additionally, the current 
\esrc pipeline only provide ad-hoc support for robots differential constraints.
In an extension to the current work, we plan to further improve the efficiency 
and optimality of \esrc and add generic support for robots with differential 
constraints including aerial vehicle swarms. 

%An open problem is how a better scale-match method 
%can affect the overall performance. 
%Depending on the specific placement of robots in bounding circles, 
%linear scaling may not always be the perfect choice. 
%Using a more complicated yet flexible method
%can potentially reduce makespan and the size of the grid generated.
%
%Design a \sag\ that works well on sparse graph 
%will also helps to make the algorithm works better.
%In Section~\ref{sec:evaluation}, the \sag is applied by simply 
%fill the graph with random start and goals, 
%which introduces significant amount of overhead.
%A better design and implementation will surely leads us to better experimental results.

{\small
\bibliographystyle{IEEEtran}
\bibliography{packed_mrpp.bbl}
}

\end{document}